\newtheorem{theorem}{Theorem}
\newtheorem{definition}{Definition}
\newtheorem{proposition}{Proposition}
\newtheorem{lemma}{Lemma}
\newtheorem{corollary}{Corollary}
\newtheorem{remark}{Remark}
\newtheorem{standing}{Standing Assumption}
\newcommand{\R}{\mathbb{R}}
\newcommand{\Z}{\mathbb{Z}}
\newcommand{\N}{\mathbb{N}}
\newcommand{\bbS}{\mathbb{S}}
\newcommand{\mc}{\mathcal}
\newcommand{\be}{\begin{equation}}
\newcommand{\ee}{\end{equation}}
\newcommand{\argmin}{\arg \min}
\newcommand{\bs}{\boldsymbol}
\begin{document}

\title{ {Decentralized Convergence to Nash Equilibria in Constrained Deterministic \\ Mean Field Control} }
\author{Sergio Grammatico, Francesca Parise, Marcello Colombino, and  John Lygeros
\thanks{The authors are with the Automatic Control Laboratory, ETH Zurich, Switzerland.  Research partially supported by the European Commission under project DYMASOS (FP7-ICT 611281) and by the Swiss National Science Foundation (grant 2-773337-12). The first three authors contributed equally as principal authors.
E-mail addresses: \{\texttt{grammatico}, \texttt{parisef}, \texttt{mcolombi}, \texttt{lygeros}\}\texttt{@control.ee.ethz.ch}.
}
}
\maketitle         

\begin{abstract}
This paper considers decentralized control and optimization methodologies for large populations of systems, consisting of several agents with different individual behaviors, constraints and interests, and affected by the aggregate behavior of the overall population. For such large-scale systems, the theory of aggregative and mean field games has been {established} and successfully applied in various scientific disciplines. 
While the existing literature addresses the case of unconstrained agents, we formulate deterministic mean field control problems in the presence of heterogeneous convex constraints for the individual agents, 
for instance arising from agents with linear dynamics subject to convex state and control constraints. 
We propose several model-free feedback iterations to compute in a decentralized fashion a mean field Nash equilibrium in the limit of infinite population size.
We apply our methods to the constrained linear quadratic deterministic mean field control problem and to the constrained mean field charging control problem for large populations of plug-in electric vehicles.
\end{abstract}

\section{Introduction}
Decentralized control and optimization in large populations of systems are of interest to various scientific disciplines, such as engineering, mathematics, social sciences, system biology and economics. A population of systems comprises {several} interacting heterogeneous agents, each with its own individual dynamic behavior and interest. {For the case of small/medium size populations,} such interactions can be analyzed via dynamic noncooperative game theory \cite{basar:olsder}. 

On the other hand, for large populations of systems the analytic solution of the game equations becomes computationally intractable. Aggregative and population games \cite{kukushkin:04, dubey:haimanko:zapechelnyuk:06, jensen:10, sandholm} represent a viable solution method to address {large population} problems where the behavior of each agent is affected by some aggregate effect of all the agents, rather than by specific one-to-one effects.
This feature attracts substantial research interest, indeed motivated by several relevant applications, including demand side management (DSM) for large populations of {prosumers} in smart grids~
\cite{mohsenian-rad:10, bagagiolo:bauso:14, chen:li:louie:vucetic:14, ma:hu:spanos:14}, charging coordination for large fleets of plug-in electric vehicles (PEVs) \cite{ma:callaway:hiskens:13, gan:topku:low:13, parise:colombino:grammatico:lygeros:14}, congestion control for networks of shared resources \cite{barrera:garcia:15}, synchronization of populations of coupled oscillators in power networks~\cite{yin:mehta:meyn:shanbhag:12,dorfler:bullo:14}.

Along these lines, Mean Field (MF) games have emerged as a methodology to study multi-agent coordination problems where 
{each individual agent is influenced by the statistical distribution of the population, and its contribution to the population distribution vanishes as the number of agents grows \cite{huang:caines:malhame:03,huang:caines:malhame:07, lasry:lions:07}. Specific research attention has been posed to MF setups where the effect of the population on each individual agent is given by a weighted average among the agents' strategies.} Unlike aggregative games, the distinctive feature of MF games is the emphasis on the limit of infinite population size, as this abstraction allows one to approximate the average population behavior based on its statistical properties only \cite{huang:caines:malhame:03,huang:caines:malhame:07, lasry:lions:07}. In the most general case, as the number of agents tends to infinity, the coupled interactions among the agents can be modeled mathematically via a system of two coupled Partial Differential Equations (PDEs), the Hamilton--Jacobi--Bellman (HJB) PDE for the optimal response of each individual agent \cite{huang:caines:malhame:03,huang:caines:malhame:07} and the Fokker--Planck--Kolmogorov (FPK) PDE for the dynamical evolution of the population distribution \cite{lasry:lions:07}. 
From the computational point of view, in the classical MF game setups, all the agents need information regarding the statistical properties of the population behavior to solve the MF equations in a decentralized fashion.


In this paper, we {consider} \textit{deterministic} MF games, as in \cite{bagagiolo:bauso:14, ma:callaway:hiskens:13, parise:colombino:grammatico:lygeros:14, bauso:pesenti:13}, 
{with an information structure for the agents which differs from the one of classical MF games. Specifically, we assume that the agents do not have access to the statistical properties of the population but, 
on the contrary, react optimally to a common external signal, which is broadcast by a central population coordinator}. 
{This information structure is typical of many large-scale multi-agent coordination problems, for instance in large fleets of PEVs \cite{ma:callaway:hiskens:13, gan:topku:low:13, parise:colombino:grammatico:lygeros:14}, DSM in smart grids \cite{mohsenian-rad:10, chen:li:louie:vucetic:14, ma:hu:spanos:14}, and congestion control \cite{barrera:garcia:15}. 
We then define the \textit{mean field control} problem as the task of designing an incentive signal that the central coordinator should broadcast so that the decentralized optimal responses of the agents satisfy some desired properties, in terms of the original deterministic MF game.} Contrary to the standard approach used to solve MF games, our MF control approach allow us to compute (almost) Nash equilibria for deterministic MF games in which the individual agents are subject to \textit{heterogeneous convex constraints}, for instance arising from different linear dynamics, convex state and input constraints. 
{Our motivation} comes from the fact that constrained systems arise naturally in almost all engineering applications, playing an active role in the agent behavior. 

In the presence of constraints, the optimal response of each agent is in general not known in closed form. To overcome this difficulty, we build on mathematical definitions and tools from convex analysis and operator theory \cite{bauschke:combettes, berinde}, establishing useful regularity properties of the mapping describing the aggregate population behavior.
We solve the constrained deterministic MF control problem via several specific feedback iterations and show convergence to {an incentive signal generating a MF equilibrium in a \textit{decentralized} fashion, 
making our methods scalable as the population size increases.
Analogously to \cite{ma:callaway:hiskens:13, parise:colombino:grammatico:lygeros:14, huang:caines:malhame:07, bauso:pesenti:13}, we seek convergence to a MF Nash equilibrium, that is, we focus on equilibria in which each agent has no interest to change its strategy, given the aggregate strategy of the others.

The  contributions of the paper are hence the following:
\begin{itemize}
\item We address the deterministic mean field control problem for populations of agents with heterogeneous convex constraints.

\item We show that the set of optimal responses to an incentive signal that is a fixed point of the population aggregation mapping gets arbitrarily close to a mean field Nash equilibrium, as the population size grows.

\item We show several regularity properties of the mappings arising in constrained deterministic mean field control problems.

\item We show that specific feedback iterations are suited to solve constrained deterministic mean field control problems with specific regularity.

\item  We apply our results to the constrained linear quadratic deterministic mean field control problem and to the constrained mean field charging control problem for large populations of plug-in electric vehicles, showing extensions to literature results.
\end{itemize}


The paper is structured as follows. Section \ref{sec:LQ-motivation} presents as a motivating example the LQ deterministic MF control problem for agents with linear dynamics, quadratic cost function, convex state and input constraints. 
Section \ref{sec:problem} shows the general deterministic MF control problem and the technical result about the approximation of a MF Nash equilibrium.
Section \ref{sec:quest} contains the main results, regarding some regularity properties of parametric convex programs arising in deterministic MF problems and the decentralized convergence to a MF Nash equilibrium of specific feedback iterations.   
Section \ref{sec:applications-extensions} discusses two applications of our technical results; it revises the constrained LQ deterministic MF control problem and presents the constrained MF charging problem for a large populations of heterogeneous PEVs. Section \ref{sec:conclusion} concludes the paper and highlights several possible extensions and applications.
Appendix \ref{app:operator-theory} presents some background definitions and results from operator theory; Appendix \ref{app:finite-horizon} justifies the use of finite-horizon formulations to approximate infinite-horizon discounted-cost ones; {Appendix \ref{app:proofs} contains all the proofs of the main results.}

\subsection*{Notation}
$\R$, $\R_{>0}$, $\R_{\geq 0}$ respectively denote the set of real, positive real, non-negative real numbers; $\N$ denotes the set of natural numbers; $\Z$ denotes the set of integer numbers; for $a, b \in \Z$, $a \leq b$, $\Z[a,b] := [a,b] \cap \Z$.
$A^\top \in \R^{m \times n}$ denotes the transpose of $A \in \R^{n \times m}$. Given vectors $x_1, \ldots, x_T \in \R^n$, $\left[ x_1; \cdots; x_T \right] \in \R^{nT}$ denotes $\left[ x_1^{\top}, \cdots, x_T^{\top} \right]^\top \in \R^{n T}$.
Given matrices $A_1, \ldots, A_M$, $\text{diag}\left( A_1, \ldots, A_M\right)$ denotes the block diagonal matrix with $A_1, \ldots, A_M$ in block diagonal positions.
With $\bbS^n$ we denote the set of symmetric $n \times n$ matrices; for a given $Q \in \R^{n \times n}$, {the notations} $Q \succ 0$ ($Q \succcurlyeq 0$) and $Q \in \bbS_{\succ 0}^n$ ($Q \in \bbS_{\succcurlyeq 0}^n$) denote that $Q$ is symmetric and has positive (non-negative) eigenvalues.
We denote by $\mathcal{H}_{Q}$, with $Q \succ 0$, the Hilbert space $\R^n$ with inner product $\langle \cdot, \cdot \rangle_{Q}: \R^n \times \R^n \rightarrow \R$ defined as $\langle x, y \rangle_{Q} := x^\top Q y$, and induced norm $\left\| \cdot \right\|_Q : \R^n \rightarrow \R_{\geq 0}$ defined as $\left\| x \right\|_Q := \sqrt{ x^\top Q x }$. 
A mapping $f: \R^n \rightarrow \R^n$ is Lipschitz in $\mathcal{H}_{Q}$ if there exists $L>0$ such that $\left\| f(x) - f(y)\right\|_Q \leq L \left\| x-y\right\|_Q$ for all $x,y \in \R^n$.
$\text{Id}: \R^n \rightarrow \R^n$ denotes the identity operator, $\textup{Id}(x) := x$ for all $x \in \R^n$. Every mentioned set $\mathcal{S} \subseteq \R^n$ is meant to be nonempty, unless explicitly stated.
The projection operator in $\mathcal{H}_{Q}$, $\text{Proj}_{\mathcal{C}}^{Q} : \R^n \rightarrow \mathcal{C} \subseteq \R^n $, is defined as $\text{Proj}_{\mathcal{C}}^{Q}(x) := \arg \min_{y \in \mathcal{C}} \left\|x-y \right\|_Q = \arg \min_{y \in \mathcal{C}} \left\|x-y \right\|^2_Q$. $I_n$ denotes the $n$-dimensional identity matrix; $\bs{0}$ denotes a matrix of all $0$s; $\bs{1}$ denotes a matrix/vector of all $1$s.
$A \otimes B$ denotes the Kronecker product between matrices $A$ and $B$. Given $\mathcal{S} \subseteq \R^n$, $A \in \R^{n \times n}$ and $b \in \R^n$, $A \mathcal{S} + b$ denotes the set $\{ A x+b \in \R^n \mid x \in \mathcal{S} \}$; hence given $\mathcal{S}^1, \ldots, \mathcal{S}^N \subseteq \R^n$ and $a_1, \ldots, a_N \in \R$, $\textstyle \frac{1}{N}\left( \sum_{i=1}^{N} a_i \mathcal{S}^i \right)  := \left\{ \frac{1}{N} \sum_{i=1}^{N} a_i x^i \in \R^n \mid x^i \in \mathcal{S}^i \ \forall i \in \Z[1,N] \right\}$.
The notation $\varepsilon_N = \mathcal{O}\left(1/N\right)$ denotes that there exists $c > 0$ such that $\lim_{N \rightarrow \infty} N \varepsilon_N = c $.

\section{Constrained linear quadratic deterministic mean field control as motivating example} \label{sec:LQ-motivation}

We start by considering a population of $N \in \N$ agents, where each agent $i \in \Z[1,N]$ has discrete-time linear dynamics
\begin{equation}
s_{t+1}^i = A^i s_t^i + B^i u_t^i,
\end{equation}
{where} $s^i \in \R^p$ {is} the state variable, $u^i \in \R^m$  {is} the input variable, and $A^i \in \R^{p \times p}$, $B^i \in \R^{p \times m}$. For each agent, we consider time-varying  {state and input} constraints
\begin{equation}
s_t^i \in \mc{S}_t^i, \ u_t^i \in \mc{U}_t^i
\end{equation}
for all $t \in \N$, where $\mc{S}_t^i \subset \R^p$ and $ \mc{U}_t^i \subset \R^m$ are convex compact sets.

Let us consider that each agent $i \in \Z[1,N]$ seeks a dynamical evolution that, given the initial state $s_0^i \in \R^p$, minimizes the finite-horizon cost function 
\begin{equation} \label{eq:Ji}
\textstyle   J\left( \bs{s}^i, \bs{u}^i,  \frac{1}{N} \sum_{j=1}^N \bs{s}^j \right) = 
\textstyle \sum_{t=0}^{T-1} \left\| s_{t+1}^i - \left( \eta + \frac{1}{N} \sum_{j=1}^N s_{t+1}^j \right) \right\|_{Q_{t+1}}^{2} \! + \left\| u_t^i\right\|_{R_t}^2
\end{equation}
where $Q_{t+1} \in \bbS_{\succ 0}^{p}$,  $R_t \in \bbS_{\succ 0}^{m}$ for all $t \in \Z[0,T-1]$, 
$\bs{s}^i = \left[ s_1^i; \ldots; s_T^i \right] \in \left( \mc{S}_1^i \times \cdots \times \mc{S}_{T}^i \right) \subset \R^{p T}$, 
$\bs{u}^i = \left[ u_0^i; \ldots; u_{T-1}^i \right] \in \left( \mc{U}_0^i \times \cdots \times \mc{U}_{T-1}^i \right) \subset \R^{m T}$ for all $i \in \Z[1,N]$, and $\eta \in \R^p$.

The cost function $J^i$ in \eqref{eq:Ji} is the sum of two cost terms, 
$\textstyle \left\| s_{t+1}^i - \left(  \eta + \frac{1}{N} \sum_{j=1}^N s_{t+1}^j  \right) \right\|^{2}$ and $\left\| u_t^i\right\|^2$; the first penalizes deviations from the average population behavior plus some constant offset $\eta$, while the second penalizes the control effort of the single agent. Note that the time-varying weights $\{Q_t\}_{t=1}^{T}$ and $\{R_t\}_{t=0}^{T-1}$ can also model an exponential cost-discount factor as in \cite[Equation 2.6]{huang:caines:malhame:07}, e.g., $Q_t = \lambda^t Q$ and $R_t = \lambda^t R$ for some $\lambda \in (0,1)$ and $Q, R \succ 0$.

We emphasize that the optimal decision of each agent $i \in \Z[1,N]$, that is, a feasible dynamical evolution $(\bs{s}^i, \bs{u}^i)$ minimizing the cost $J$ in \eqref{eq:Ji}, also depends on the decisions of all other agents through the average state $\frac{1}{N} \sum_{j=1}^N \bs{s}^j$ among the population. 
This feature results in an aggregative game \cite{kukushkin:04, dubey:haimanko:zapechelnyuk:06, jensen:10} among the population of agents, and specifically in a (deterministic) MF game, because the individual agent's state/decision depends on the mean population state \cite{ma:callaway:hiskens:13, bauso:pesenti:13}.

The constrained LQ deterministic MF control problem then consists in steering the {optimal responses} to a noncooperative equilibrium $\left\{ \left( \bar{\bs{s}}^i, \bar{\bs{u}}^i \right) \right\}_{i=1}^{N}$ {of the original MF game}, which satisfies the constraints and is convenient for all the individual noncooperative agents, via an appropriate incentive signal. To solve the MF control problem, we consider an algorithmic setup where a central coordinator, called virtual agent in \cite[Section IV.B]{huang:caines:malhame:07}, broadcasts a macroscopic incentive related to the average population state $\frac{1}{N} \sum_{j=1}^N \bs{s}^j $ to all the agents. In other words, the individual agents have no detailed information about every other agent, 
nor about their statistical distribution, but only react to the information broadcast by a central coordinator, which is somehow related to their aggregate behavior.

Formally, we define the {\textit{optimal response}} to a given reference vector $\bs{z} \in \R^{p \, T}$ of each agent $i \in \Z[1,N]$ as the solution to the following finite horizon optimal control problem:
\begin{equation} \label{eq:xi-ui-MFC}
\begin{array}{l}
\left( \bs{s}^{i \, \star}( \bs{z} ), \, \bs{u}^{i \, \star}(\bs{z}) \right) := \vspace{0.1cm}\\
\begin{array}{rll}
\displaystyle \argmin_{  (\bs{s}, \bs{u})  } & J\left( \bs{s}, \bs{u}, \bs{z} \right) & \\
\text{s.t. } & s_{t+1} = A^i s_t + B^i u_t, & \\
 & s_{t+1} \in \mathcal{S}_{t}^{i}, \ u_t \in \mathcal{U}_{t}^{i} & {\forall t \in \Z[0,T-1]}. \\
\end{array}
\end{array}
\end{equation}
We assume that the optimization problem \eqref{eq:xi-ui-MFC} is feasible for all agents $i \in \Z[1,N]$, that is, given the initial state $s_0^i \in \R^p$, we assume that there exists a control input sequence 
$\bs{u}^i = [u_0^i; \ldots; u_{T-1}^i] \in \mc{U}_0^i \times \cdots \times \mc{U}_{T-1}^{i}$ such that the sets $\{ \mathcal{S}_t^{i} \}_{t=1}^{T}$ are reachable at time steps $t = 1, \ldots, T$, respectively \cite[Chapter 6]{blanchini:miani}. This assumption can be checked by solving a convex feasibility problem; furthermore, the set of initial states $s_0^i$ such that \eqref{eq:xi-ui-MFC} is solvable can be computed by solving the feasibility problem parametrically in $s_0^i$.

We refer to \cite[Section III]{huang:caines:malhame:07} for the stochastic continuous-time infinite-horizon unconstrained counterpart of 
our linear quadratic (LQ) MF {game}. Here we focus on a discrete-time finite-horizon formulation to effectively address state and input constraints, by embedding them in finite-dimensional convex quadratic programs (QPs) that are efficiently solvable numerically.

Let us now rewrite the optimization problem in \eqref{eq:xi-ui-MFC} in the following compact form:
\begin{align}\label{eq:opt_problem-LQ}
\begin{split}
\displaystyle \bs{y}^{i \, \star}( \bs{z} ) := & \argmin_{  \bs{y} \in \bs{\mc{Y}}^i  }  J\left( \bs{y}, \bs{z} \right) \\
 = & \argmin_{  \bs{y} \in \bs{\mc{Y}}^i  }  \left\| \bs{y} \right\|_{ \bs{Q} }^2 + \left\|  \bs{y} - \left[\begin{smallmatrix} \bs{z} \\ \bs{0} \end{smallmatrix}\right] \right\|_{ \bs{\Delta} }^2 + 2 \bs{c}^\top \bs{y}
\end{split}
\end{align}
where $\bs{y} = \left[ \bs{s}; \bs{u} \right] \in \R^{(p+m)T}$, \\ $\bs{c} := -\textup{diag}\left( \textup{diag}\left( Q_1, \ldots, Q_T \right), \bs{0} \right) \left( \left[ \begin{smallmatrix} \bs{1} \\ \bs{0} \end{smallmatrix} \right] \otimes \eta  \right)$,
with
\begin{align} \label{eq:Qi-LQ}
\bs{Q} := & \ \textup{diag}\left( \bs{0}, \text{diag}\left( R_0, \ldots, R_{T-1} \right) \right), \\
\bs{\Delta} := & \ \textup{diag}\left( \text{diag}\left( Q_1, \ldots, Q_{T} \right), \bs{0} \right),
\end{align}
and, for a given initial condition $s_0^i \in \R^p$, 
\begin{equation} \label{eq:Yi-LQ}
\bs{\mc{Y}}^i := \left\{ \left[ \bs{s}; \bs{u} \right] \in \R^{(p+m)T} \mid \ s_{t+1} = A^i s_t + B^i u_t, \right. 
\left.  s_{t+1} \in \mathcal{S}_{t+1}^{i}, u_t \in \mathcal{U}_{t}^{i} \quad \forall t \in \Z[0,T-1] \right\}.
\end{equation} 

Motivated by the constrained LQ MF setup, in the next section we consider a broader class of deterministic MF control problems. In Section \ref{sec:Discrete-time constrained linear quadratic mean field control} we then apply the technical results in Sections \ref{sec:problem}, \ref{sec:quest} to the constrained LQ MF control problem. 


\section{Deterministic mean field control problem with convex constraints} \label{sec:problem}

\subsection{Constrained deterministic mean field game {with quadratic cost function}}

We consider a large population of $N \in \N$ heterogeneous agents, where each agent $i \in \Z[1,N]$ controls its decision variable $x^i$, taking values in the \textit{compact} and \textit{convex} set $\mathcal{X}^i \subset \R^n$. The aim of agent $i$ is to minimize its individual deterministic cost 
$ \textstyle J\left(x^i, \sigma \right),$ 
which depends on its own strategy $x^i$ and on the weighted average of strategies of all the agents, that is $\sigma:=\frac{1}{N} \sum_{j=1}^N  a_j x^j$ for some aggregation parameters $a_1, \ldots, a_N \geq 0$.  {Technically, each agent $i \in \Z[1,N]$ aims at computing the \textit{best response} to the other agents' strategies $\boldsymbol{x}^{-i} := ( x^1, \ldots, x^{i-1}, x^{i+1}, \ldots, x^{N} )$, that is,
\begin{align} \label{eq:br}
x^{i}_{ \textup{br} }(\boldsymbol{x}^{-i}) := & \arg \min_{ y \in \mathcal{X}^i } J\left(y, \frac{a_i}{N} y + \frac{1}{N}  \sum_{j\neq i}^{N} a_j {x}^j \right). 
\end{align}
Note that the best response mapping $x^{i}_{ \textup{br} }$ depends only on the aggregate of the other players strategies, thus leading to a MF game setup.
In classical game theory, a set of strategies in which every agent is playing a best response to the other players strategies is called Nash equilibrium. In the MF case, the concept is similar: if the population is at a MF Nash equilibrium, then each agent has no individual benefit to change its strategy, given the \textit{aggregation} among the strategies of the others.}

\vspace{0.2cm}
\begin{definition}[Mean field Nash equilibrium] \label{def:Nash-equ}
Given a cost function $J: \R^n \times \R^n \rightarrow \R$ and aggregation parameters $a_1, \ldots, a_N \geq 0$, a set of strategies $\{ \bar{x}^i \in \mc{X}^i \subseteq \R^{n} \}_{i=1}^{N} $ is a MF $\varepsilon$-Nash equilibrium, with $\varepsilon > 0$, if for all $i \in \Z[1,N]$ it holds
\begin{equation} \label{eq:epsilon-Nash}
 J\left( \bar{x}^i, \textstyle \frac{1}{N} \sum_{j=1}^{N} a_j \bar{x}^j \right) \leq 
\min_{ y \in \mathcal{X}^i } \textstyle J\left(y, \frac{1}{N} \left( a_i y + \sum_{j\neq i}^{N} a_j \bar{x}^j \right) \right) + \varepsilon.
\end{equation}
It is a MF Nash equilibrium if \eqref{eq:epsilon-Nash} holds with $\varepsilon=0$.
{\hfill $\square$}
\end{definition}
\vspace{0.2cm}

In the sequel, we consider the class of deterministic MF games with convex \textit{quadratic} cost $J: \R^n \times \R^n \rightarrow \R$ defined as
\begin{align} \label{cost}
J(x,\sigma) := \left\| x \right\|_Q^2 + \left\| x-\sigma \right\|_{\Delta}^{2} + 2 \left( C \sigma + c \right)^\top x
\end{align}
where $Q, \Delta \in \bbS_{\succcurlyeq 0}^{n}$, $Q+\Delta \in \bbS_{\succ 0}^{n}$, $C \in \R^{n \times n}$ and $c \in \R^n$. 

The three cost terms in \eqref{eq:constrained-optimizer} emphasize the contribution of three different contributions to the cost function: a quadratic cost $\left\| x \right\|_Q^2$, 
typical of LQ MF games \cite{bauso:pesenti:13, huang:caines:malhame:07, huang:caines:malhame:12}, 
a quadratic penalty $\left\| x-\sigma \right\|_{\Delta}^{2}$ on the deviations from the aggregate information \cite{huang:caines:malhame:07, ma:callaway:hiskens:13}, and an affine price-based incentive $2 \left( C \sigma + c \right)^\top x = p(\sigma)^\top x$ \cite{ma:callaway:hiskens:13, parise:colombino:grammatico:lygeros:14}. Let us also notice that he agents are fully heterogeneous relative to the constraint sets $\{ \mathcal{X}^i \}_{i=1}^N$.

Throughout the paper, we consider uniformly bounded aggregation parameters and individual constraint sets for all population sizes, which is typical of all the mentioned engineering applications.
\vspace{0.2cm}
\begin{standing}[Compactness] \label{ass:uniform-compactness}
There exist $\bar{a} >0$ and a compact set $\mathcal{X} \subset \R^n$
such that $a_i \in [0,\bar{a}]$ for all $i \in \Z[1,N]$, $\sum_{i=1}^{N} a_i = N$, and $\mathcal{X} \supseteq \cup_{i=1}^{N} \mathcal{X}^i$ hold for all $N$.
{\hfill $\square$}
\end{standing}

\vspace{0.2cm}
\begin{remark}
The formulation in \eqref{eq:constrained-optimizer} subsumes the one in \eqref{eq:opt_problem-LQ} with 
general $Q, \Delta$ in place of the $\bs{Q}, \bs{\Delta}$ from \eqref{eq:Qi-LQ}, general $C, c$ in place of $\bs{0}, \bs{c}$ in \eqref{eq:opt_problem-LQ}, general convex set $\mc{X}^i$ in place of $\bs{\mc{Y}}^i$ in \eqref{eq:Yi-LQ}, and any finite dimension $n \in \N$ in place of $(p+m)T$ in \eqref{eq:opt_problem-LQ}--\eqref{eq:Yi-LQ}. 
The notation $x^i$ replaces $\bs{y}^i$ in \eqref{eq:opt_problem-LQ} for the (state and input) decision strategy of each agent $i$.
{\hfill $\square$}
\end{remark}
\vspace{0.2cm}

Given {the cost function in \eqref{cost}} and the uniform boundedness assumption, the MF game in \eqref{eq:br} admits at least one MF Nash equilibrium, see Definition \ref{def:Nash-equ}, as stated next and proved in Appendix \ref{app:proofs}.
\vspace{0.2cm}
\begin{proposition} \label{prop:exists-Nash}
There exists a MF Nash equilibrium for the game in \eqref{eq:br} with $J$ as in \eqref{cost}.
{\hfill $\square$}
\end{proposition}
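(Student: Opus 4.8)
The plan is to characterize a MF Nash equilibrium as a fixed point of the joint best-response correspondence and to invoke Kakutani's fixed-point theorem. I would work over the product strategy space $\bs{\mc{X}} := \mc{X}^1 \times \cdots \times \mc{X}^N \subset \R^{nN}$, which is nonempty, compact and convex because each $\mc{X}^i$ is so and by Standing Assumption \ref{ass:uniform-compactness}. Stacking the individual best responses \eqref{eq:br} into $\Phi(\boldsymbol{x}) := \prod_{i=1}^N x^i_{\textup{br}}(\boldsymbol{x}^{-i})$, a profile $\{\bar{x}^i\}_{i=1}^N$ is a MF Nash equilibrium (that is, it satisfies \eqref{eq:epsilon-Nash} with $\varepsilon = 0$, see Definition \ref{def:Nash-equ}) exactly when $\bar{\boldsymbol{x}} \in \Phi(\bar{\boldsymbol{x}})$, since at such a profile the argument $\frac1N\sum_{j}a_j\bar{x}^j$ on the left of \eqref{eq:epsilon-Nash} coincides with the second argument of $J$ evaluated at $y=\bar{x}^i$. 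It therefore suffices to show that $\Phi$ has nonempty, convex, compact values and a closed graph.

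Second, I would verify these properties from the quadratic form of $J$ in \eqref{cost}. For fixed $\boldsymbol{x}^{-i}$, write $\beta_i := a_i/N$ and $w_i := \frac1N\sum_{j\neq i}a_j x^j$, so that the reduced objective is $f_i(y) := J(y,\beta_i y + w_i)$, a polynomial (hence continuous, indeed smooth) function of $y$ that is jointly continuous in $(y,\boldsymbol{x}^{-i})$. Weierstrass' theorem on the compact set $\mc{X}^i$ gives nonemptiness and compactness of $\arg\min_{y\in\mc{X}^i}f_i(y)$, while Berge's maximum theorem delivers upper hemicontinuity of $\Phi$, and hence its closed graph. The only nontrivial property is convexity of the value sets: substituting $\sigma=\beta_i y + w_i$ into \eqref{cost} and collecting the second-order terms in $y$ yields the Hessian $\nabla^2 f_i = 2\big(Q + (1-\beta_i)^2\Delta + \beta_i(C+C^\top)\big)$, so each best-response set is convex as soon as this matrix is positive semidefinite.

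Third --- and this is where I expect the main obstacle --- I would prove that $Q + (1-\beta_i)^2\Delta + \beta_i(C+C^\top) \succcurlyeq 0$. The difficulty is that the affine price incentive $2(C\sigma+c)^\top x$ makes the aggregate depend on the agent's own decision through $\beta_i y$, generating the possibly indefinite term $\beta_i(C+C^\top)$, which is not directly controlled by the standing hypothesis $Q+\Delta\succ 0$ (the latter only guarantees that the \emph{fixed-reference} objective $J(\cdot,z)$, with Hessian $2(Q+\Delta)$, is strictly convex). Rewriting the Hessian as $(Q+\Delta) - (2\beta_i-\beta_i^2)\Delta + \beta_i(C+C^\top)$, the perturbation of $Q+\Delta$ has operator norm $\mc{O}(\beta_i)$ with $\beta_i = a_i/N \in [0,\bar{a}/N]$; since $Q+\Delta$ is positive definite with a strictly positive smallest eigenvalue, the key step is to show that this $\mc{O}(1/N)$ perturbation is dominated by that gap, so that the Hessian stays positive definite. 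In that case each $f_i$ is strictly convex, $x^i_{\textup{br}}$ is single-valued and continuous, $\Phi$ is a continuous self-map of $\bs{\mc{X}}$, and Brouwer's theorem already yields a fixed point; in the general (merely semidefinite) case, the convex-valued $\Phi$ satisfies all the hypotheses of Kakutani's theorem, whose fixed point is the desired MF Nash equilibrium.
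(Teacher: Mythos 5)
Your overall architecture is exactly the paper's: characterize a MF Nash equilibrium as a fixed point of the stacked best-response map on the compact convex product set $\mc{X}^1\times\cdots\times\mc{X}^N$, then invoke a topological fixed-point theorem. Where you differ is in which theorem you reach for and in how much you interrogate the regularity of the best response. The paper observes that the reduced cost $\bar J(y,\,\cdot\,)$ is again quadratic, writes the stacked $\argmin$ as a single-valued continuous self-map of the product set, and cites a Schauder-type result (Smart, Theorem 4.1.5(b)); it does not verify that the reduced objective $y\mapsto J\bigl(y,\tfrac{a_i}{N}y+w_i\bigr)$ is (strictly) convex, which is what single-valuedness and continuity of the $\argmin$ tacitly require. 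You instead go through Kakutani, which forces you to confront convexity of the best-response values, and your computation of the reduced Hessian $2\bigl(Q+(1-\beta_i)^2\Delta+\beta_i(C+C^\top)\bigr)$ correctly identifies the crux: the standing hypothesis $Q+\Delta\succ0$ controls only the fixed-reference Hessian $2(Q+\Delta)$, not this perturbed one, and for fixed small $N$ with $C$ sufficiently indefinite relative to $Q+\Delta$ the reduced objective can fail to be convex, in which case neither Kakutani nor Brouwer applies. So your third paragraph is not a defect of your argument so much as an honest exposure of a step the paper's proof passes over silently; your $\mc{O}(\bar a/N)$ perturbation bound cleanly repairs it for all $N$ beyond an explicit threshold (which is the regime the paper cares about, cf.\ Theorem \ref{thm:epsilon:nash}), but as written neither your proof nor the paper's establishes the claim for every $N$ without an additional hypothesis on $C$. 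One small point worth stating explicitly if you finish the argument: upper hemicontinuity via Berge needs the constraint correspondence $\boldsymbol{x}^{-i}\mapsto\mc{X}^i$ to be continuous, which is immediate here since it is constant, so that part is fine.
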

\vspace{0.2cm}

\subsection{Information structure and mean field control}
\label{sec:parametric-convex-programs}

We notice that to compute {the best response strategy} $x^i_{ \textup{br} }$ 
each agent $i$ would need to know the aggregation among the strategies of all other agents, namely $\frac{1}{N} \sum_{j\neq i}^{N}  a_j x^{j}$.
Motivated by several large-scale multi-agent applications \cite{mohsenian-rad:10, chen:li:louie:vucetic:14, ma:hu:spanos:14, ma:callaway:hiskens:13, parise:colombino:grammatico:lygeros:14, barrera:garcia:15}, 
here we consider a different information structure where each individual agent $i$ has neither knowledge about the states $\{ x^j \}_{j\neq i}$ of the other agents, nor about the aggregation parameters $\{ a_j \}_{j=1}^{N}$. {Instead, here every agent $i$ reacts to some macroscopic incentive, which is a function of the aggregate information about the whole population, including its contribution $x^i$ as well, and is broadcast to all the agents. 
Given this information structure,
we assume that each agent $i \in \Z[1,N]$ reacts to a broadcast signal $z \in \R^n$ through the optimal-response mapping} $x^{i \, \star}: \R^n \rightarrow \mc{X}^i \subset \R^n$ defined as
\begin{align} \label{eq:constrained-optimizer}
x^{i \, \star}(z) := & \arg \min_{ x \in \mathcal{X}^i } J(x,z) 
\end{align}
where $J$ is as defined in \eqref{cost}. 

{Moreover, let us formalize the aggregate (e.g., average) population behavior obtained when all the agents react optimally to a macroscopic signal by defining with the aggregation mapping 
$\mathcal{A}: \R^n \rightarrow \left( \frac{1}{N} \sum_{i=1}^{N} a_i \mathcal{X}^i \right) \subset \R^n$ as
\begin{equation} \label{eq:mean}
\textstyle \mathcal{A}(z) := \frac{1}{N} \sum_{i=1}^{N} a_i x^{i \, \star}(z).
\end{equation}}

{\begin{remark} 
The difference between the best response mapping $x^{i}_{ \textup{br} }$ defining the game in \eqref{eq:br}, and the optimal response mapping $x^{i \, \star}$ in \eqref{eq:constrained-optimizer} is that, while in the former an agent $i$ can also optimize its contribution $x^i$ in $\frac{1}{N} \sum_{j=1}^{N} a_j x^j $, in the latter the signal $z$ is fixed and hence the optimization in \eqref{eq:constrained-optimizer} is carried over the first argument of $J$ only.
{\hfill $\square$}
\end{remark}}
\vspace{0.2cm}

{According to the information structure described above, the MF control addresses the problem of designing a reference signal $\bar{z}$, such that the set of strategies $\{x^{i\,\star}(\bar z)\}_{i=1}^N$ possesses some desired properties, relative to the deterministic MF game in \eqref{eq:br}. Specifically, here we require the set of strategies to be an almost MF Nash equilibrium. 
To solve this MF control problem, we consider a} setup where the $N$ agents communicate to a central coordinator in a decentralized \textit{iterative} fashion. Namely, for a given broadcast signal $z_{(k)}$ at iteration $k \in \N$, each agent~$i$ computes its optimal response $x^{i \, \star}\left(z_{(k)} \right)$ based only on its own constraint set $\mathcal{X}^{i}$, that is its private information.
The central coordinator then receives the aggregate $\mathcal{A}\left(z_{(k)}\right)$ of all the individual responses, 
computes an updated reference $z_{(k+1)} = \Phi_k\left( z_{(k)}, \mathcal{A}\left(z_{(k)}\right) \right)$ through some feedback mapping $\Phi_k$,
broadcasts it to the whole population, 
and the process is repeated.

Technically speaking, given the cost function $J$, the agents constraint sets $\{ \mathcal{X}^i \}_{i=1}^{N}$ and the aggregation parameters $\{ a_i \}_{i=1}^{N}$,
the MF control problem consists in designing a signal $\bar{z} \in \R^n$, for instance 
via a feedback iteration $z_{(k+1)} = \Phi_k\left( z_{(k)}, \mathcal{A}\left(z_{(k)}\right) \right)$ such that, for any initial condition $z_{(0)} \in \R^n$,  $z_{(k)} \rightarrow \bar{z}$ which generates a MF (almost) Nash equilibrium $\{ x^{i \, \star}( \bar{z} ) \}_{i=1}^{N}$ {for the original MF game in \eqref{eq:br}}.

\subsection{Mean field Nash equilibrium in the limit of infinite population size} \label{sec:decentralized-convergence} 

Since the objective of our MF control problem is to find a MF Nash equilibrium for \textit{large population} size, we exploit the {Nash certainty equivalence principle or mean field approximation} idea \cite[Section IV.A]{huang:caines:malhame:07}. 
{Namely, for any agent $i$, the problem structure is such that the contribution of an individual strategy $x^i$ to the average population behavior $\mc{A}$ is negligible. 
Therefore, if $\bar{z} = \frac{1}{N} \sum_{j=1}^{N} a_j {x}^{j \, \star}(\bar{z}) = \mathcal{A}(\bar{z})$, then the optimal response ${x}^{i \, \star}(\bar{z})$ approximates the best response ${x}^{i}_{ \textup{br} }(\boldsymbol{x}^{-i}(\bar{z}))$ of agent~$i$ to the strategies $\{{x}^{j \, \star}(\bar{z})\}_{j\neq i}$ of the other players, for large population size $N$.}

Formally, under the uniform compactness condition for all population sizes in Standing Assumption \ref{ass:uniform-compactness},
the following result shows that a fixed point of the aggregation mapping $\mathcal{A}$ in \eqref{eq:mean} generates a MF Nash equilibrium in the limit of infinite population size.

\vspace{0.2cm}
\begin{theorem}[{Infinite population limit}] \label{thm:epsilon:nash}
\normalfont For all $\varepsilon > 0$, there exists $\bar{N}_{\varepsilon} \in \N$ such that, for all $N \geq \bar{N}_{\varepsilon}$, if $\bar{z}$ is a fixed point of $\mathcal{A}$ in \eqref{eq:mean}, that is,
$ \bar{z} = \frac{1}{N}\sum_{i=1}^{N} a_i x^{i \, \star}\left( \bar{z} \right)$, 
then the set
$\{ x^{i \, \star}\left( \bar{z} \right)\}_{i=1}^{N}$, 
with $x^{i \, \star}$ as in \eqref{eq:constrained-optimizer} for all $i \in \Z[1,N]$, is a MF $\varepsilon$-Nash equilibrium.
{\hfill $\square$}
\end{theorem}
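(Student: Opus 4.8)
The plan is to verify the $\varepsilon$-Nash inequality \eqref{eq:epsilon-Nash} directly for the candidate profile $\bar{x}^i := x^{i \, \star}(\bar{z})$, exploiting that the only discrepancy between an agent's \emph{optimal} response (to the frozen signal $\bar{z}$) and its \emph{best} response (which internalizes its own weight $a_i/N$ in the aggregate) is the self-term $\frac{a_i}{N} y$, of order $1/N$. First I would use the fixed-point hypothesis $\bar{z} = \frac{1}{N}\sum_{j=1}^{N} a_j \bar{x}^j$ to rewrite the left-hand side of \eqref{eq:epsilon-Nash} as $J(\bar{x}^i, \bar{z})$, and abbreviate $\sigma^{-i} := \frac{1}{N}\sum_{j \neq i} a_j \bar{x}^j$, so that $\bar{z} = \frac{a_i}{N}\bar{x}^i + \sigma^{-i}$. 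With this notation the right-hand side of \eqref{eq:epsilon-Nash} is exactly $J(\hat{y}^i, \frac{a_i}{N}\hat{y}^i + \sigma^{-i}) + \varepsilon$, where $\hat{y}^i := \arg\min_{y \in \mathcal{X}^i} J(y, \frac{a_i}{N} y + \sigma^{-i})$ is the best response, which exists because $\mathcal{X}^i$ is compact and $J$ is continuous.

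The core of the argument is a two-step chain. By optimality of $x^{i\,\star}(\bar{z})$ for the frozen problem $\min_{x \in \mathcal{X}^i} J(x,\bar{z})$ and feasibility of $\hat{y}^i \in \mathcal{X}^i$, one has $J(\bar{x}^i, \bar{z}) \leq J(\hat{y}^i, \bar{z})$. It then remains to compare $J(\hat{y}^i, \bar{z})$ with $J(\hat{y}^i, \frac{a_i}{N}\hat{y}^i + \sigma^{-i})$, i.e. the same cost at the same first argument but at two aggregates differing by $\delta_i := \bar{z} - (\frac{a_i}{N}\hat{y}^i + \sigma^{-i}) = \frac{a_i}{N}(\bar{x}^i - \hat{y}^i)$. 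Using the explicit quadratic form \eqref{cost}, expanding $\|\hat{y}^i - \bar{z}\|_{\Delta}^2$ around $\hat{y}^i - \frac{a_i}{N}\hat{y}^i - \sigma^{-i}$ and collecting the linear term $2(C\delta_i)^\top \hat{y}^i$, the difference $J(\hat{y}^i, \bar{z}) - J(\hat{y}^i, \frac{a_i}{N}\hat{y}^i + \sigma^{-i})$ is an affine-plus-quadratic expression in $\delta_i$ with no zeroth-order term, hence bounded in magnitude by $c_1 \|\delta_i\| + c_2 \|\delta_i\|^2$ with $c_1, c_2$ depending only on $\|\Delta\|$, $\|C\|$, and a uniform bound on the relevant points.

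The key estimate — and the step that actually consumes Standing Assumption \ref{ass:uniform-compactness} — is that $\|\delta_i\| = \mathcal{O}(1/N)$ \emph{uniformly in $i$ and in $N$}. Indeed $\|\delta_i\| = \frac{a_i}{N}\|\bar{x}^i - \hat{y}^i\| \leq \frac{\bar{a}}{N}\,\mathrm{diam}(\mathcal{X})$ since $a_i \in [0,\bar{a}]$ and $\bar{x}^i, \hat{y}^i \in \mathcal{X}^i \subseteq \mathcal{X}$; the same compactness bounds $\hat{y}^i$ and $\sigma^{-i} = \frac{1}{N}\sum_{j\neq i} a_j \bar{x}^j$ uniformly, which is what renders $c_1, c_2$ independent of $i$ and $N$. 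Combining, $J(\hat{y}^i, \bar{z}) \leq J(\hat{y}^i, \frac{a_i}{N}\hat{y}^i + \sigma^{-i}) + K/N$ for a constant $K$ depending only on $\bar{a}$, $\mathcal{X}$, $\Delta$, $C$. Chaining with the optimality inequality yields $J(\bar{x}^i, \bar{z}) \leq \min_{y \in \mathcal{X}^i} J(y, \frac{a_i}{N} y + \sigma^{-i}) + K/N$ for every $i$, so taking $\bar{N}_\varepsilon := \lceil K/\varepsilon \rceil$ establishes the claim for all $N \geq \bar{N}_\varepsilon$.

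I expect the only genuinely delicate point to be ensuring \emph{uniformity} of the constant $K$ across agents and population sizes: the estimate must not secretly depend on $N$ through, e.g., the magnitude of $\sigma^{-i}$ or of the minimizers. This is precisely where Standing Assumption \ref{ass:uniform-compactness} (a single compact $\mathcal{X}$ containing all $\mathcal{X}^i$ for every $N$, together with $a_i \leq \bar{a}$) is indispensable; without it one could not rule out the $\mathcal{O}(1/N)$ perturbation being amplified by an unbounded modulus of continuity of $J$ over the relevant domain. Everything else reduces to a routine continuity-and-compactness estimate on the quadratic $J$.
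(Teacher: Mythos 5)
Your proposal is correct, and it reaches the same $\varepsilon_N = \mathcal{O}(1/N)$ conclusion by the same high-level strategy (the agent's self-influence on the aggregate perturbs the signal by $\tfrac{a_i}{N}(\bar x^i - \hat y^i)$, which is $\mathcal{O}(1/N)$ uniformly by Standing Assumption \ref{ass:uniform-compactness}), but the decomposition of the error is genuinely different and slightly leaner. The paper sandwiches $\bar J^{i\,\star} - \tilde J^{i\,\star}$ by $\bar J^{i\,\star} - \tilde{\tilde J}^{i\,\star}$, where $\tilde{\tilde J}^{i\,\star} = \min_{y\in\mathcal{X}^i} J(y,\tilde z)$ is the re-optimized cost at the perturbed signal $\tilde z$; bounding this requires tracking how the \emph{minimizer} moves, i.e.\ the uniform Lipschitz constant $L$ of $z \mapsto x^{i\,\star}(z)$, which in turn rests on the projection characterization of Lemma \ref{lem:parametric-optimizer} and Remark \ref{rem:optimizer-Lipschitz}, together with Lipschitz continuity of $J$ in \emph{both} arguments. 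You instead chain $J(\bar x^i,\bar z) \le J(\hat y^i,\bar z) \le J\bigl(\hat y^i, \tfrac{a_i}{N}\hat y^i + \sigma^{-i}\bigr) + K/N$, evaluating $J$ at the \emph{fixed} first argument $\hat y^i$ at two nearby aggregates; this needs only continuity of $J$ in its second argument over a compact set, and dispenses with the Lipschitz property of the optimal-response mapping entirely. What the paper's route buys is that the intermediate objects ($\tilde{\tilde x}^{i\,\star}$, the constant $L$) are reused elsewhere (they are exactly the regularity ingredients of Theorems \ref{th:mean-pseudocontractive}--\ref{prop:mean-pseudocontractive}); what your route buys is a shorter, more self-contained proof of this particular theorem, with an explicit constant $K$ depending only on $\bar a$, $\operatorname{diam}(\mathcal{X})$, $\Delta$, $C$. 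Your emphasis on the uniformity of $K$ across $i$, $N$, and the choice of fixed point is exactly the right delicate point, and your treatment of it is sound.
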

\vspace{0.2cm}

\begin{remark} 
It follows from the proof of Theorem \ref{thm:epsilon:nash}, given in Appendix \ref{app:proofs}, that a fixed point of $\mathcal{A}$ in \eqref{eq:mean} with population size $N$ is a MF $\varepsilon_N$-Nash equilibrium with $\varepsilon_N = \mathcal{O}\left(1/N\right)$. 
Having a uniform upper bound $\bar{a}$ on the aggregation parameters $\{ a_i \}_{i=1}^{N}$ means that no single agent has a disproportionate influence on the population aggregation for large population size, which is a typical feature of MF setups \cite{bauso:pesenti:13, huang:caines:malhame:07, ma:callaway:hiskens:13}.
{\hfill $\square$}
\end{remark}
\vspace{0.2cm}

Theorem \ref{thm:epsilon:nash} suggests that we can design the feedback mappings $\left\{ \Phi_k \right\}_{k=1}^{\infty}$ to {iteratively steer the average population behavior} to a fixed point of the aggregation mapping $\mathcal{A}$ in \eqref{eq:mean}, as this is an approximate solution to the MF control problem for large population size.

\section{The quest for a fixed point of the aggregation mapping} \label{sec:quest}

\subsection{Mathematical tools from fixed point operator theory}
In this section we present the mathematical definitions needed for the technical results in Section~\ref{ref:main-results}, regarding appropriate fixed point iterations relative to the aggregation mapping. For ease of notation, the statements of this section are formulated in an arbitrary finite-dimensional Hilbert space $\mathcal{H}$, that is, in terms of an arbitrary norm $\left\| \cdot \right\|$ on $\R^n$, but in general hold for infinite-dimensional metric spaces.

We start from the property of contractiveness \cite[Definition 1.6]{berinde}, exploited in most of the MF control literature \cite{huang:caines:malhame:07, huang:caines:malhame:12, ma:callaway:hiskens:13} to show, under appropriate technical assumptions, convergence to a fixed point of the aggregation mapping.

\vspace{0.2cm}
\begin{definition}[{Contraction mapping}]
\label{def:CON}
A mapping $f: \R^n \rightarrow \R^n $ is a contraction (CON) if there exists $\epsilon \in (0,1]$ such that
\begin{equation} \label{eq:CON}
\left\| f(x) - f(y) \right\| \leq (1-\epsilon) \left\| x-y \right\|
\end{equation}
for all $x, y \in \R^n$.
{\hfill $\square$}
\end{definition}

If a mapping $f$ is CON, then the Picard--Banach iteration, $k = 0, 1, 2, \ldots$,
\begin{equation} \label{eq:Picard-Banach}
z_{(k+1)} = f \left( z_{(k)} \right) =: \Phi^{ \text{P--B} }\left( z_{(k)}, f \left( z_{(k)} \right) \right)
\end{equation}
converges, for any initial condition $z_{(0)} \in \R^n$, to its unique fixed point \cite[Theorem 2.1]{berinde}.


Although commonly used in the MF game literature \cite{huang:caines:malhame:07, huang:caines:malhame:12, ma:callaway:hiskens:13}, contractiveness is a quite restrictive property.
In this paper we actually exploit less restrictive properties than contractiveness, starting with nonexpansiveness \cite[Definition 4.1 (ii)]{bauschke:combettes}.

\vspace{0.2cm}
\begin{definition}[{NonExpansive mapping}]
A mapping $f: \R^n \rightarrow \R^n $ is nonexpansive (NE) if
\begin{equation} \label{eq:NE}
\left\| f(x) - f(y) \right\| \leq \left\| x-y \right\|
\end{equation}
for all $x, y \in \R^n$.
{\hfill $\square$}
\end{definition}
Clearly, a CON mapping is also NE, while the converse does not necessarily hold. Note that, unlike CON mappings, NE mappings, e.g., the identity mapping, may have more than one fixed point.
Among NE mappings, let us refer to firmly nonexpansive mappings \cite[Definition 4.1 (i)]{bauschke:combettes}. 
\vspace{0.2cm}
\begin{definition}[{Firmly NonExpansive mapping}] \label{def:FNE-mapping}
A mapping $f: \R^n \rightarrow \R^n $ is firmly nonexpansive (FNE) if
\begin{equation} \label{eq:FNE}
\left\| f(x) - f(y) \right\|^2 \leq \left\| x-y \right\|^2 - \left\|  f(x) - f(y) - \left( x-y\right) \right\|^2
\end{equation}
for all $x, y \in \R^n$.
{\hfill $\square$}
\end{definition}
An example of FNE mapping is the metric projection onto a closed convex set $\text{Proj}_{\mathcal{C}}: \R^n \rightarrow \mathcal{C} \subseteq \R^n$ \cite[Proposition 4.8]{bauschke:combettes}.

The FNE condition is sufficient for the Picard--Banach  in \eqref{eq:Picard-Banach} iteration to converge to a fixed point \cite[Section 1, p. 522]{combettes:pennanen:02}. 
This is not the case for NE mappings; for example, $z \mapsto f(z) := -z$ is NE, but not CON, and the Picard--Banach iteration $z_{(k+1)} = f(z_{(k)}) = -z_{(k)}$ oscillates indefinitively between $z_{(0)}$ and $-z_{(0)}$.
If a mapping $f: \mathcal{C} \rightarrow \mathcal{C}$ is NE, with $\mathcal{C} \subset \R^n$ compact and convex, then the Krasnoselskij iteration
\begin{equation} \label{eq:Krasnoselskij}
z_{(k+1)} = (1 - \lambda)z_{(k)} + \lambda  f \left( z_{(k)} \right) =: \Phi^{ \text{K} }\left(z_{(k)}, f \left( z_{(k)} \right)\right)
\end{equation}
where $\lambda \in (0,1)$, converges, for any initial condition $z_{(0)} \in \mathcal{C}$, to a fixed point of $f$ \cite[Theorem 3.2]{berinde}.

Finally, we consider the even weaker regularity property of  strict pseudocontractiveness \cite[Remark 4, pp. 12--13]{berinde}.
\vspace{0.2cm}
\begin{definition}[{Strictly PseudoContractive mapping}] 
\label{def:SPC}
A mapping $f: \R^n \rightarrow \R^n$ is strictly pseudocontractive (SPC) if there exists $\rho < 1$ such that
\begin{equation} \label{eq:SPC}
\left\| f(x) - f(y) \right\|^2 \leq \left\| x-y\right\|^2  + \rho \left\| f(x) - f(y) - \left(x-y \right)\right\|^2 
\end{equation}
for all $x, y \in \R^n$.
{\hfill $\square$}
\end{definition}

If a mapping $f: \mathcal{C} \rightarrow \mathcal{C}$ is SPC with $\mathcal{C} \subset \R^n$ compact and convex, then the Mann iteration
\begin{equation} \label{eq:Mann}
z_{(k+1)} =  (1-\alpha_k) z_{(k)} + \alpha_k f \left(z_{(k)} \right) =: \Phi_k^{ \text{M} }\left( z_{(k)}, f\left(z_{(k)} \right) \right) 
\end{equation}
where $\left(\alpha_k\right)_{k=0}^{\infty}$ is such that $\alpha_k \in (0,1) \ \forall k \geq 0$, $\lim_{k \rightarrow \infty} \alpha_k = 0$ and $\sum_{k=0}^{\infty} \alpha_k = \infty $, converges, for any initial condition $z_{(0)} \in \mathcal{C}$, to a fixed point of $f$ \cite[Fact 4.9, p. 112]{berinde}, \cite[Theorem R, Section I]{osilike:udomene:01}.

It follows from Definitions \ref{def:CON}--\ref{def:SPC} that $f$ FNE $\Longrightarrow$ $f$ NE, $f$ CON $\Longrightarrow$ $f$ NE $\Longrightarrow$ $f$ SPC. 
Therefore, the Mann iteration in \eqref{eq:Mann} ensures convergence to a fixed point for CON, FNE, NE and SPC mappings;
the Krasnoselskij iteration in \eqref{eq:Krasnoselskij} ensures convergence for CON, FNE and NE mappings; the Picard--Banach iteration in \eqref{eq:Picard-Banach}  for CON and FNE mappings.

The known upper bounds on the convergence rates suggest that a simpler iteration has faster convergence in general. The convergence rate for the Picard--Banach iteration is linear, that is  $\left\| z_{(k+1)} - \bar{z} \right\| / \left\| z_{(k)} - \bar{z} \right\| \leq 1 - \epsilon$ \cite[Chapter 1]{berinde}. Instead, the convergence rate for the Mann iteration is sublinear, specifically $\left\| z_{(k+1)} - \bar{z} \right\| / \left\| z_{(k)} - \bar{z} \right\| \leq 1 - \epsilon \, \alpha_{k}$ \cite[Chapter 4]{berinde}, for some $\epsilon > 0$.

Note that CON mappings have a unique fixed point \cite[Theorem 1.1]{berinde}, whereas FNE, NE, SPC mappings may have multiple fixed points. 
In our context this implies that, unless the aggregation mapping is CON, there could exist multiple MF Nash equilibria, which is effectively the case in multi-agent applications.

\subsection{Main results: Regularity and decentralized convergence} \label{ref:main-results}
Using the definitions and properties of the previous section, we can now state our technical result about the regularity of the optimal solution $x^{i \, \star}$ in \eqref{eq:constrained-optimizer} of the parametric convex program in \eqref{eq:constrained-optimizer}.

\vspace{0.2cm}
\begin{theorem}[Regularity of the optimizer] \label{th:mean-pseudocontractive}
Consider the following matrix inequality, where $Q, \Delta, C$ are from \eqref{eq:constrained-optimizer}:
\begin{equation} \label{eq:condition-contractive-constrained-optimizer}
\left[
\begin{matrix}
Q + \Delta & \Delta - C \\
\left( \Delta - C \right)^\top & Q + \Delta
\end{matrix} 
\right] \succcurlyeq \epsilon I.
\end{equation}
The mapping $x^{i \, \star}$ in \eqref{eq:constrained-optimizer} is: \vspace{0.1cm}\\
\begin{tabular}{llll}
CON & in $\mathcal{H}_{Q + \Delta}$ & if & \eqref{eq:condition-contractive-constrained-optimizer} holds with $\epsilon > 0$; \\
NE  &  in $\mathcal{H}_{Q + \Delta}$ & if & \eqref{eq:condition-contractive-constrained-optimizer} holds with $\epsilon \geq 0$; \\
FNE & in $\mathcal{H}_{\Delta-C}$ & if & $\Delta \succ C \succcurlyeq -Q$; \\
SPC & in $\mathcal{H}_{C-\Delta}$ & if & $\Delta \prec C$. \\
\end{tabular}

{\hfill $\square$}
\end{theorem}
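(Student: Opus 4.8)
The plan is to reduce the constrained optimizer to a metric projection and then read off each regularity property from the (firm) nonexpansiveness of that projection. Expanding $J(x,z)$ from \eqref{cost} and discarding the term $z^\top \Delta z$, which is constant in $x$, the minimization in \eqref{eq:constrained-optimizer} is equivalent to minimizing $x^\top (Q+\Delta) x - 2 x^\top (\Delta - C) z + 2 c^\top x$ over $x \in \mc{X}^i$. Since $Q + \Delta \succ 0$, I set $M := Q+\Delta$ and complete the square in the $\mathcal{H}_M$ inner product, writing the cost up to a constant as $\| x - w(z) \|_M^2$ with $w(z) := M^{-1}[(\Delta - C) z - c]$. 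Hence $x^{i \, \star}(z) = \textup{Proj}_{\mc{X}^i}^{M}\left( w(z) \right)$: the optimizer (single-valued, as $\mc{X}^i$ is convex compact and $M \succ 0$) is the composition of the metric projection onto $\mc{X}^i$ in $\mathcal{H}_{Q+\Delta}$, which is FNE and hence NE there, with the affine map $w$.

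Next I fix $z_1, z_2$ and abbreviate $\xi := x^{i\,\star}(z_1) - x^{i\,\star}(z_2)$, $d := z_1 - z_2$, and $\delta := w(z_1) - w(z_2) = M^{-1}(\Delta - C) d$, so that $M\delta = (\Delta - C) d$. Two inequalities then serve all four cases: from NE of the projection in $\mathcal{H}_M$, $\xi^\top M \xi \le \delta^\top M \delta = d^\top (\Delta - C)^\top M^{-1} (\Delta - C) d$; and from FNE, $\xi^\top M \delta \ge \xi^\top M \xi$, i.e.\ $\xi^\top (\Delta - C) d \ge \xi^\top M \xi$.

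For the CON and NE claims in $\mathcal{H}_{Q+\Delta}$ I convert the block inequality \eqref{eq:condition-contractive-constrained-optimizer} into a bound on $(\Delta - C)^\top M^{-1}(\Delta - C)$: substituting the partial minimizer $u = -M^{-1}(\Delta - C) v$ into the quadratic form of \eqref{eq:condition-contractive-constrained-optimizer} (a Schur-complement computation) yields $M - (\Delta - C)^\top M^{-1}(\Delta - C) \succcurlyeq \epsilon I$. Feeding this into the NE inequality gives $\|\xi\|_M^2 \le \|d\|_M^2 - \epsilon \|d\|_2^2$, which is exactly NE for $\epsilon = 0$ and, using $\|d\|_2^2 \ge \|d\|_M^2 / \lambda_{\max}(M)$, a contraction with factor $\sqrt{1 - \epsilon/\lambda_{\max}(M)} < 1$ for $\epsilon > 0$. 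For the FNE claim in $\mathcal{H}_{\Delta - C}$, the hypothesis $\Delta \succ C$ makes $\Delta - C \succ 0$ a legitimate metric, and $C \succcurlyeq -Q$ gives $M = Q + \Delta \succcurlyeq \Delta - C$; chaining the FNE workhorse with $\xi^\top M \xi \ge \xi^\top (\Delta - C) \xi$ produces $\langle \xi, d\rangle_{\Delta - C} \ge \|\xi\|_{\Delta - C}^2$, the FNE characterization in $\mathcal{H}_{\Delta - C}$.

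The SPC case in $\mathcal{H}_{C - \Delta}$ is the delicate one and I expect it to be the main obstacle. Writing $N' := C - \Delta \succ 0$ (a valid metric since $\Delta \prec C$), the inequality \eqref{eq:SPC} rearranges to $\|\xi\|_{N'}^2 - \|d\|_{N'}^2 \le \rho \, \|\xi - d\|_{N'}^2$, so I must bound the pseudocontraction ratio strictly below $1$. Since here $M\delta = -N' d$, the FNE workhorse gives $\langle \xi, d\rangle_{N'} = -\,\xi^\top M \delta \le -\|\xi\|_M^2 \le 0$, whence the cross term is nonpositive and $\|\xi - d\|_{N'}^2 \ge \|\xi\|_{N'}^2 + \|d\|_{N'}^2$; combining the NE workhorse with the generalized-eigenvalue bound $N' \preceq \lambda_1 M$, where $\lambda_1 := \lambda_{\max}(M^{-1} N')$, yields $\|\xi\|_{N'}^2 \le \lambda_1^2 \|d\|_{N'}^2$. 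Substituting both estimates bounds the ratio by $\lambda_1^2/(1 + \lambda_1^2) < 1$, giving SPC with $\rho = \lambda_1^2/(1 + \lambda_1^2)$. Unlike the previous three cases, this is not a one-line consequence of (firm) nonexpansiveness: it requires combining the NE and FNE inequalities and tracking the generalized eigenvalues of $N'$ relative to $M$. A secondary point to dispatch is that $\mathcal{H}_{\Delta - C}$ and $\mathcal{H}_{C - \Delta}$ are genuine Hilbert spaces only when the relevant matrices are symmetric positive definite, so in the FNE and SPC cases one works throughout with the symmetric part of $C$.
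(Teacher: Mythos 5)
Your proof is correct and follows essentially the same route as the paper: write $x^{i\,\star}(z)=\textup{Proj}_{\mathcal{X}^i}^{Q+\Delta}\bigl((Q+\Delta)^{-1}((\Delta-C)z-c)\bigr)$, use the (firm) nonexpansiveness of the projection in $\mathcal{H}_{Q+\Delta}$ to get the two key inequalities $\|\xi\|_{Q+\Delta}^2\le d^\top(\Delta-C)^\top(Q+\Delta)^{-1}(\Delta-C)d$ and $\|\xi\|_{Q+\Delta}^2\le \xi^\top(\Delta-C)d$, then dispatch CON/NE via the Schur complement of \eqref{eq:condition-contractive-constrained-optimizer} and FNE via the comparison $Q+\Delta\succcurlyeq\Delta-C$; this is exactly the paper's argument (its Lemma~\ref{lem:equivalent-FNE} and inequality \eqref{eq:FNE-nondistorted-projection-xn}). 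The only genuine divergence is the SPC case: the paper observes that $\xi^\top(C-\Delta)d\le 0$ means $-x^{i\,\star}$ is monotone in $\mathcal{H}_{C-\Delta}$, adds the identity to get $\textup{Id}-x^{i\,\star}$ strongly monotone, and invokes Lemma~\ref{lem:from-SAC-to-SPC} (Lipschitz $+$ SMON $\Rightarrow$ complement is SPC), whereas you bound the pseudocontraction ratio directly, using the sign of the cross term to get $\|\xi-d\|_{C-\Delta}^2\ge\|\xi\|_{C-\Delta}^2+\|d\|_{C-\Delta}^2$ and the generalized-eigenvalue bound to get an explicit $\rho=\lambda_1^2/(1+\lambda_1^2)<1$. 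Your version is more self-contained and yields a quantitative constant; the paper's version factors the argument through reusable monotone-operator lemmas that it also needs for Theorem~\ref{prop:mean-pseudocontractive}. Your closing caveat about symmetry is apt but moot under the paper's convention, since $\Delta\succ C$ and $\Delta\prec C$ already assert that $\Delta-C$ is symmetric.
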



\vspace{0.2cm}
\begin{remark}
The condition $\Delta \succ C \succcurlyeq -Q$ in Theorem \ref{th:mean-pseudocontractive} implies \eqref{eq:condition-contractive-constrained-optimizer} with $\epsilon = 0$, in fact
\begin{equation*}
\begin{array}{rcl}
\left[ \begin{smallmatrix} Q + \Delta & \Delta - C \\ \left( \Delta - C\right)^\top & Q + \Delta\end{smallmatrix} \right] & = &
I_2 \otimes \left( Q + C \right) + \bs{1} \otimes \left( \Delta - C \right)  \\
 & \succcurlyeq & \bs{1} \otimes \left( \Delta - C \right) \succcurlyeq 0,
\end{array}
\end{equation*}
where the last matrix inequality holds true because the eigenvalues of $\textstyle \bs{1} \otimes \left( \Delta - C \right) $ equal the product of the eigenvalues of $\Delta - C$, which are positive as 
$\Delta - C \succ 0$, and the eigenvalues of $\bs{1} = \left[\begin{smallmatrix}  1 & 1 \\ 1 & 1 \end{smallmatrix} \right]$, which are non-negative ($0$ and $2$).
{\hfill $\square$}
\end{remark}
\vspace{0.2cm}



We can now exploit the structure of the aggregation mapping $\mc{A}$ in \eqref{eq:mean} to establish our main result about its regularity. 
Specifically, under the conditions of Theorem \ref{th:mean-pseudocontractive}, the aggregation mapping  $\mc{A}$ inherits the same regularity properties of the individual optimizer mappings.

\vspace{0.2cm}
\begin{theorem}[Regularity of the aggregation] \label{prop:mean-pseudocontractive}
For all $i \in \Z[1,N]$, let $x^{i \, \star}$ be defined as in \eqref{eq:constrained-optimizer}. 
The mapping $\mathcal{A}$ in \eqref{eq:mean} is Lipschitz continuous, has a fixed point, and is: \vspace{0.1cm}\\ 
\begin{tabular}{llll}
CON & in $\mathcal{H}_{Q + \Delta}$ & if & \eqref{eq:condition-contractive-constrained-optimizer} holds with $\epsilon>0$; \\
NE   & in $\mathcal{H}_{Q + \Delta}$ & if & \eqref{eq:condition-contractive-constrained-optimizer} holds with $\epsilon \geq 0$; \\
FNE & in $\mathcal{H}_{\Delta-C}$ & if & $\Delta \succ C \succcurlyeq -Q$;  \\
SPC & in $\mathcal{H}_{C-\Delta}$ & if & $\Delta \prec C$. \\
\end{tabular}

{\hfill $\square$}
\end{theorem}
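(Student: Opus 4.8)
The plan is to derive every assertion from the corresponding property of the individual optimizers $x^{i\,\star}$, already established in Theorem \ref{th:mean-pseudocontractive}, by exploiting the fact that $\mathcal{A}$ is a \emph{convex combination} of them. Indeed, by Standing Assumption \ref{ass:uniform-compactness} the coefficients $a_i/N$ are nonnegative and $\sum_{i=1}^N a_i/N = 1$, so $\mathcal{A}=\sum_{i=1}^N \frac{a_i}{N}\,x^{i\,\star}$ is a genuine convex combination. The core of the argument is therefore to show that each of the four regularity classes is preserved under convex combinations, in the respective Hilbert space $\mathcal{H}_{Q+\Delta}$, $\mathcal{H}_{\Delta-C}$, or $\mathcal{H}_{C-\Delta}$ fixed by the corresponding hypothesis of Theorem \ref{th:mean-pseudocontractive}.

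For the CON and NE cases I would argue directly from the definitions by the triangle inequality: writing $\mathcal{A}(z)-\mathcal{A}(w)=\sum_i \frac{a_i}{N}\bigl(x^{i\,\star}(z)-x^{i\,\star}(w)\bigr)$ and bounding $\|\mathcal{A}(z)-\mathcal{A}(w)\|_{Q+\Delta}\le \sum_i \frac{a_i}{N}\|x^{i\,\star}(z)-x^{i\,\star}(w)\|_{Q+\Delta}$ transfers the common Lipschitz constant $1-\epsilon$ (resp.\ $1$) from the summands to $\mathcal{A}$. The same computation with the worst-case individual constant gives the unconditional Lipschitz continuity of $\mathcal{A}$, since each $x^{i\,\star}$ is the minimizer of a strongly convex program ($Q+\Delta\succ 0$) and hence a metric projection of an affine image of $z$ in $\mathcal{H}_{Q+\Delta}$, which is Lipschitz.

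The FNE and SPC cases are the delicate ones, because their defining inequalities \eqref{eq:FNE} and \eqref{eq:SPC} involve squared norms on both sides, so the triangle inequality no longer suffices; this is the step I expect to be the main obstacle. My plan is to pass to an equivalent cocoercivity reformulation, obtained by expanding the squared norms in \eqref{eq:FNE}--\eqref{eq:SPC}: in the ambient Hilbert space, $f$ is FNE iff $\langle f(x)-f(y),x-y\rangle \ge \|f(x)-f(y)\|^2$, and $f$ is $\rho$-SPC iff $\langle (\textup{Id}-f)(x)-(\textup{Id}-f)(y),\,x-y\rangle \ge \tfrac{1-\rho}{2}\|(\textup{Id}-f)(x)-(\textup{Id}-f)(y)\|^2$. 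Both reduce the claim to the statement that cocoercivity is preserved under convex combinations, which I would prove by combining linearity of the inner product in its first argument (to split the left-hand side over $i$) with convexity of the squared norm (Jensen's inequality $\|\sum_i \frac{a_i}{N} v_i\|^2 \le \sum_i \frac{a_i}{N}\|v_i\|^2$), so that the common cocoercivity constant survives the averaging. Translating back yields that $\mathcal{A}$ is FNE in $\mathcal{H}_{\Delta-C}$ (resp.\ SPC in $\mathcal{H}_{C-\Delta}$); note this again uses $\textup{Id}-\mathcal{A}=\sum_i \frac{a_i}{N}(\textup{Id}-x^{i\,\star})$, which relies on $\sum_i a_i/N=1$.

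Finally, for existence of a fixed point I would invoke Brouwer's theorem. The range of $\mathcal{A}$ is contained in the set $\mathcal{K}:=\frac{1}{N}\sum_{i=1}^N a_i\mathcal{X}^i$, which is nonempty, convex and compact because each $\mathcal{X}^i$ is (Standing Assumption \ref{ass:uniform-compactness}); hence $\mathcal{A}$ restricts to a continuous self-map $\mathcal{A}\colon \mathcal{K}\to\mathcal{K}$ of a nonempty compact convex set and therefore admits a fixed point. Continuity here is exactly the Lipschitz continuity established in the second paragraph, so no further regularity hypothesis is needed for this part.
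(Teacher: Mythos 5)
Your proof is correct and follows essentially the same route as the paper: write $\mathcal{A}$ as a convex combination of the $x^{i\,\star}$, show each regularity class is preserved under convex combinations, and invoke a Brouwer-type argument on the compact convex set $\frac{1}{N}\sum_{i=1}^N a_i\mathcal{X}^i$ for the fixed point. The only point to make explicit in the SPC case is that the cocoercivity constant of $\mathrm{Id}-x^{i\,\star}$ can be taken \emph{uniform} in $i$ (it depends only on the common Lipschitz constant of the $x^{i\,\star}$, which are all $1$-Lipschitz projections of the same affine map), since that is what licenses pulling the constant outside the sum before applying Jensen; with that remark, your direct Jensen-based verification of the preservation lemmas simply replaces the paper's citation to Bauschke--Combettes for the FNE case and its detour through strong monotonicity of $\mathrm{Id}-\mathcal{A}$ for the SPC case, the substance being identical.
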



Theorem \ref{prop:mean-pseudocontractive} directly leads to iterative methods for finding a fixed point of the aggregation mapping, that is a solution of the MF control problem in the limit of infinite population size.

\vspace{0.2cm}
\begin{corollary}[Decentralized convergence] \label{cor:iterations}
The following iterations and conditions guarantee global convergence to a fixed point of $\mathcal{A}$ in \eqref{eq:mean}, where $x^{i \, \star}$ is as in \eqref{eq:constrained-optimizer} for all $i \in \Z[1,N]$: \vspace{0.1cm}\\
\begin{tabular}{lclll}
$1$. Picard--Banach &  \eqref{eq:Picard-Banach} & if & \eqref{eq:condition-contractive-constrained-optimizer} holds ($\epsilon > 0$) or \\
    & & & $\Delta \succ C \succcurlyeq -Q$; \\
$2$. Krasnoselskij &  \eqref{eq:Krasnoselskij} & if & \eqref{eq:condition-contractive-constrained-optimizer} holds ($\epsilon \geq 0$); & \\
$3$. Mann &  \eqref{eq:Mann} & if & \eqref{eq:condition-contractive-constrained-optimizer} holds ($\epsilon \geq 0$) or \\
 & & & $\Delta \prec C$.\\
\end{tabular}

{\hfill $\square$}
\end{corollary}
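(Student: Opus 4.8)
The plan is to obtain each of the three convergence claims as an immediate consequence of Theorem \ref{prop:mean-pseudocontractive} combined with the standard fixed-point iteration results already quoted in the Mathematical Tools section. The key observation is that Theorem \ref{prop:mean-pseudocontractive} furnishes, under each stated matrix condition, a regularity class (CON, NE, FNE, or SPC) for the aggregation mapping $\mathcal{A}$ \emph{with respect to a specific Hilbert-space norm} $\mathcal{H}_{Q+\Delta}$, $\mathcal{H}_{\Delta - C}$, or $\mathcal{H}_{C-\Delta}$. Since all these norms are equivalent on $\R^n$ (they are induced by positive-definite matrices), convergence in one of them is equivalent to convergence in the Euclidean norm, so the choice of weighting matrix does not affect the qualitative convergence statement.

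First I would treat the Picard--Banach case. If \eqref{eq:condition-contractive-constrained-optimizer} holds with $\epsilon>0$, then by Theorem \ref{prop:mean-pseudocontractive} the mapping $\mathcal{A}$ is CON in $\mathcal{H}_{Q+\Delta}$, and the Picard--Banach iteration \eqref{eq:Picard-Banach} converges to its unique fixed point by \cite[Theorem 2.1]{berinde}. If instead $\Delta \succ C \succcurlyeq -Q$, then $\mathcal{A}$ is FNE in $\mathcal{H}_{\Delta - C}$, and the FNE property is sufficient for Picard--Banach convergence by \cite[Section 1, p.\ 522]{combettes:pennanen:02}. Next I would handle the Krasnoselskij case: under \eqref{eq:condition-contractive-constrained-optimizer} with $\epsilon\geq 0$, Theorem \ref{prop:mean-pseudocontractive} gives $\mathcal{A}$ NE in $\mathcal{H}_{Q+\Delta}$; since $\mathcal{A}$ maps into the compact convex set $\frac{1}{N}\sum_{i=1}^N a_i \mathcal{X}^i$, the Krasnoselskij iteration \eqref{eq:Krasnoselskij} converges by \cite[Theorem 3.2]{berinde}. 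Finally, for the Mann case, either \eqref{eq:condition-contractive-constrained-optimizer} with $\epsilon\geq 0$ yields $\mathcal{A}$ NE (hence SPC, since NE $\Rightarrow$ SPC), or $\Delta \prec C$ yields $\mathcal{A}$ SPC in $\mathcal{H}_{C-\Delta}$ directly; in both subcases convergence of \eqref{eq:Mann} over the compact convex image of $\mathcal{A}$ follows from \cite[Fact 4.9, p.\ 112]{berinde} and \cite[Theorem R, Section I]{osilike:udomene:01}.

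The only subtlety, rather than a genuine obstacle, is to verify in each case that the compactness and convexity hypotheses required by the Krasnoselskij and Mann theorems are met. This is guaranteed because, by Standing Assumption \ref{ass:uniform-compactness}, each $\mathcal{X}^i$ is compact and convex and the $a_i$ are nonnegative, so the set $\mathcal{C} := \frac{1}{N}\sum_{i=1}^N a_i \mathcal{X}^i$ into which $\mathcal{A}$ maps is itself compact and convex, and $\mathcal{A}$ restricts to a self-map $\mathcal{A}:\mathcal{C}\to\mathcal{C}$ once the iterates are initialized in $\mathcal{C}$ (the Krasnoselskij and Mann updates are convex combinations of a point in $\mathcal{C}$ and its image, hence remain in $\mathcal{C}$). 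With this self-map structure established and the norm-equivalence remark in hand, each row of the corollary is just the pairing of the corresponding regularity class from Theorem \ref{prop:mean-pseudocontractive} with its matching iteration theorem, so no further computation is needed.
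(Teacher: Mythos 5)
Your proposal is correct and follows essentially the same route as the paper: invoke the regularity classification of $\mathcal{A}$ from Theorem \ref{prop:mean-pseudocontractive}, pair each class with the matching iteration theorem from \cite{berinde}, \cite{combettes:pennanen:02}, \cite{osilike:udomene:01}, and justify the compact-convex self-map hypothesis for the Krasnoselskij and Mann iterations via the fact that $\mathcal{A}$ takes values in $\frac{1}{N}\sum_{i=1}^N a_i\mathcal{X}^i$. Your added remarks on norm equivalence and on the iterates remaining in that set mirror the paper's own treatment (which restricts attention to iterates after the initial condition), so nothing is missing.
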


Note that convergence is ensured in different norms, namely $\textstyle \left\| \cdot \right\|_{ Q + \Delta }$, $\textstyle \left\| \cdot \right\|_{ \Delta - C}$ if $\Delta \succ C $ or $\textstyle \left\| \cdot \right\|_{ C- \Delta }$ if $C \succ \Delta$; this is not a limitation since all norms are equivalent in finite-dimensional Hilbert spaces.

We emphasize that each iterative method presented in Corollary \ref{cor:iterations} has its specific range of applicability depending on the specific MF control problem. This allows us to select one or more fixed point feedback iterations from the specific knowledge of the regularity property at hand. An important advantage of Corollary \ref{cor:iterations} is that decentralized convergence is guaranteed under conditions independent of the individual constraints $\{ \mathcal{X}^i \}_{i=1}^{N}$, but on the common cost function $J$ in \eqref{eq:constrained-optimizer} only. Therefore, our results and methods apply naturally to populations of heterogeneous agents.

Let us summarize in Algorithm \ref{alg:iterations} our proposed decentralized iteration to compute a fixed point of the aggregation mapping $\mathcal{A}$, where the feedback mapping $\Phi_k \in \left\{ \Phi^{\textup{P--B}}, \Phi^{\textup{K}}, \Phi_k^{\textup{M}} \right\}$ is chosen in view of Corollary \ref{cor:iterations}.

\begin{algorithm}[htb] 
\caption{Decentralized mean field control.}
\label{alg:iterations}

\textbf{Initialization}: $z \leftarrow z_{(0)}$, $k \leftarrow 0$.

\textbf{Iteration}:

$\quad$ $\displaystyle x^{i \, \star}(z) \leftarrow \arg \min_{x \in \mathcal{X}^i} J(x,z), \ i = 1, 2, \ldots, N$;
\vspace{0.1cm}

$\quad$ $ \mathcal{A}(z) \leftarrow \frac{1}{N} \sum_{i=1}^{N} a_i x^{i \star}(z)$;
\vspace{0.1cm}

$\quad$ $ z \leftarrow \Phi_k\left( z, \mathcal{A}(z)\right) $; \vspace{0.2cm}

$\quad$ $k \leftarrow k+1$.

\end{algorithm}

Note that under the conditions of Corollary \ref{cor:iterations}, Algorithm \ref{alg:iterations} guarantees convergence to a fixed point of the aggregation mapping $\mc{A}$ in \eqref{eq:mean} in a decentralized fashion. Let us also emphasize that any fixed point of $\mc{A}$ generates a MF $\varepsilon_N$-Nash equilibrium by Theorem \ref{thm:epsilon:nash}, that is not an exact Nash equilibrium for finite population size $N$, mainly because only some aggregate information $z$, which is related to $\frac{1}{N} \sum_{j=1}^N a_j x^j$, is broadcast to all the agents. In other words, we consider an information structure where each agent $i$ is not aware of the aggregate strategy $\frac{1}{N} \sum_{j \neq i}^N a_j x^j$ of the other agents $\{ x^j \}_{j \neq i}^N$, because this would require that, at each iteration step, the central coordinator communicates $N$ different quantities to the agents, namely $\frac{1}{N} \sum_{j=2}^N a_j x^j$ to agent $1$, $\frac{1}{N} \sum_{j \neq 2 }^N a_j x^j$ to agent $2$, up to $\frac{1}{N} \sum_{j =1 }^{N-1} a_j x^j$ to agent $N$.

\subsection{Discussion on decentralized convergence results in aggregative games}

Decentralized convergence to Nash equilibria in terms of fixed point iterations has been studied in aggregative game theory, for populations of finite size. 
Most of literature results show convergence of sequential (i.e., not simultaneous/parallel) best-response updates of the agents \cite[Cournot path]{kukushkin:04} \cite[Theorem 2]{jensen:10}, under the assumption that the best-response mappings of the players are non-increasing \cite[Assumption 1']{jensen:10}, besides continuous and compact valued.

In large-scale games, however, simultaneous/parallel responses as in Algorithm \ref{alg:iterations} are computationally more convenient with respect to sequential ones. Within the literature of aggregative games, the Mann iteration in \eqref{eq:Mann} has been proposed in \cite[Remark 2]{dubey:haimanko:zapechelnyuk:06} for the simultaneous (parallel) best responses of the agents. See \cite{heikkinen:06} for an application to distributed power allocation and scheduling in congested distributed networks. The aggregative game setup in these papers considers the strategy of the players to be a $1$-dimensional variable taking values in a compact interval of the real numbers. Convergence is then guaranteed if the best-response mappings of the players are continuous, compact valued and non-increasing \cite[conditions (i)--(iii), p. 81, Section 2]{dubey:haimanko:zapechelnyuk:06}.

It actually follows from the proof of Theorem \ref{prop:mean-pseudocontractive} that the condition $\Delta \prec C$ implies that the opposite  of aggregation mapping in \eqref{eq:mean}, i.e., $-\mc{A}(\cdot)$, is monotone, which is the $n$-dimensional generalization of the non-increasing property.
We conclude that Theorem \ref{prop:mean-pseudocontractive} provides mild sufficient conditions on the problem data such that the convergence result in Corollary \ref{cor:iterations} subsumes, limited to the quadratic cost function case, the one in \cite[Remark 2]{dubey:haimanko:zapechelnyuk:06}.

\section{Deterministic mean field control applications} \label{sec:applications-extensions}

\subsection{Solution to the constrained linear quadratic deterministic mean field control}
\label{sec:Discrete-time constrained linear quadratic mean field control}

In view of Theorem \ref{thm:epsilon:nash}, our discrete-time, finite-horizon, constrained LQ deterministic MF control problem from Section \ref{sec:LQ-motivation} reduces to finding a fixed point of the average mapping, that is, 
$\bs{z} \in \R^{pT} $ such that 
\begin{equation} \label{eq:fixed-point-mean}
\textstyle \bs{z} = \frac{1}{N} \sum_{i=1}^{N} \bs{x}^{i \, \star}( \bs{z} ) =: \mathcal{A}(\bs{z}),
\end{equation}
where $\bs{x}^{i \, \star}$ is defined in \eqref{eq:xi-ui-MFC}. In \eqref{eq:fixed-point-mean}, we average the optimal tracking trajectories $\{  \bs{x}^{i \, \star}( \bs{z} )\}_{i=1}^{N}$ among the whole population (that is, we take $a_1 = \cdots = a_N = 1$ in \eqref{eq:mean}, so that Assumption \ref{ass:uniform-compactness} is satisfied with $\bar{a} = 1$) and we require the trajectory $\bs{z}$ to equal such average. 
For large population size, 
the interpretation is that each agent $i$ responds optimally with state and control trajectory $\bs{x}^{i \, \star}( \bs{z} ) ,\, \bs{u}^{i \, \star}( \bs{z} )$, to the mass behavior $\bs{z} = \mathcal{A}(\bs{z})$ \cite[Section I, p. 1560]{huang:caines:malhame:07}.



In the \textit{unconstrained} linear quadratic setting, that is, $\mathcal{X}_{t}^{i} = \R^p$ and $\mathcal{U}_{t}^{i} = \R^m$ for all $i \in \Z[1,N]$ and $t \geq 0$, the mappings $\bs{x}^{i \, \star}$ and $ \bs{u}^{i \, \star}$ in \eqref{eq:xi-ui-MFC} are known in closed form, in both continuous- and discrete-time case, for both infinite and finite horizon \cite[Chapter 11]{anderson:moore}. 
Using this knowledge, if we replace $\textstyle \left( \eta + \frac{1}{N} \sum_{j=1}^{N} s_{t+1}^j \right)$ in \eqref{eq:Ji} by $\textstyle \gamma \left( \eta + \frac{1}{N} \sum_{j=1}^{N} s_{t+1}^j \right)$, for $\gamma \in \R$ small enough, then the corresponding mapping $\mathcal{A}$ from \eqref{eq:fixed-point-mean} is CON\footnote{If $\gamma=0$, then the mapping $\mathcal{A}$ in \eqref{eq:fixed-point-mean} is continuous, compact valued and constant, hence CON.} \cite[Theorem 3.4]{huang:caines:malhame:07}, and therefore the Picard--Banach iteration converges to the unique fixed point of $\mathcal{A}$ \cite[Proposition 3.4]{huang:caines:malhame:07}.


Unfortunately, it turns out that the mapping $\mathcal{A}$ in \eqref{eq:fixed-point-mean} is not necessarily CON.
We therefore apply the results in Section \ref{ref:main-results} to ensure convergence of suitable fixed point iterations.
Following \cite[Equation 2.6]{huang:caines:malhame:07}, for a given $\gamma \in \R$, let us consider
\begin{equation} \label{eq:J-x-u-z-g}
J_{\gamma}\left( \bs{s}, \bs{u}, \bs{z} \right) := \sum_{t=0}^{T-1} \left\| s_{t+1} - \gamma \left( \eta + z_{t+1}\right) \right\|_{Q_{t+1}}^2 + \left\| u_t \right\|_{R_t}^{2}
\end{equation}
which similarly to \eqref{eq:opt_problem-LQ} can be rewritten as a particular case of the cost function in \eqref{eq:constrained-optimizer} by choosing $\bs{Q}$ and $\bs{\Delta}$ as in \eqref{eq:Qi-LQ},
\begin{equation} \label{eq:mfc-rewritten-matrices}
\bs{C} := (1-\gamma) \bs{\Delta},
\end{equation}
$\bs{c} := -\gamma \, \text{diag}\left(\text{diag}\left( Q_1, \cdots, Q_T \right), \bs{0}\right) \left( \left[ \begin{smallmatrix} \bs{1} \\ \bs{0} \end{smallmatrix} \right] \otimes \eta \right)$.


Note that $\left( \bs{C} \left[ \bs{z} ; \bs{\tilde{z}}  \right] \right)^\top \bs{y} = (1-\gamma)  \left[ \bs{z} ; \bs{\tilde{z}}  \right] \bs{\Delta}^\top \bs{y} = (1-\gamma) \bs{z}^\top \text{diag}\left( Q_1, \cdots, Q_T \right) \bs{s}$ for all $\tilde{\bs{z}} \in \R^{mT}$, therefore $\tilde{\bs{z}}$ does not affect the optimization problem in \eqref{eq:opt_problem-LQ} with cost function $J_{\gamma}$ in \eqref{eq:J-x-u-z-g}. Here we formally consider a vector $\left[ \bs{z} ; \bs{\tilde{z}} \right]$ of the same dimensions of $\bs{y}$ just to recover the same mathematical setting in Section \ref{ref:main-results}.


We can now show conditions for the decentralized convergence to a fixed point of the average mapping in \eqref{eq:fixed-point-mean} for the discrete-time finite-horizon constrained LQ case, 
as corollary to our results in Section \ref{ref:main-results}.

\vspace{0.2cm}
\begin{corollary}[Fixed point iterations in LQ MF control]
\label{cor:Huang-converges}
The following iterations and conditions guarantee global convergence to a fixed point of $\mathcal{A}$ in \eqref{eq:fixed-point-mean}, where $\bs{s}^{i \, \star}$ is as in \eqref{eq:xi-ui-MFC} for all $i \in \Z[1,N]$, with $J_{\gamma}$ as in \eqref{eq:J-x-u-z-g} in place of $J$: \vspace{0.1cm}\\
\begin{tabular}{lcll}
$1$. Picard--Banach &  \eqref{eq:Picard-Banach} & if & $-1 <\gamma < 1$;\\
$2$. Krasnoselskij &  \eqref{eq:Krasnoselskij} & if & $-1 \le \gamma \leq 1$; \\
$3$. Mann &  \eqref{eq:Mann} & if & $-1 \le \gamma \leq 1$.
\end{tabular}

{\hfill $\square$}
\end{corollary}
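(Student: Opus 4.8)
The plan is to derive Corollary \ref{cor:Huang-converges} directly from Corollary \ref{cor:iterations} by checking that each scalar condition on $\gamma$ is precisely an instance of the matrix inequality \eqref{eq:condition-contractive-constrained-optimizer} generated by the LQ data. First I would substitute the matrices identified in \eqref{eq:Qi-LQ} and \eqref{eq:mfc-rewritten-matrices}. Since $\bs{C} = (1-\gamma)\bs{\Delta}$, the off-diagonal block simplifies to $\bs{\Delta} - \bs{C} = \gamma \bs{\Delta}$, so the matrix appearing in \eqref{eq:condition-contractive-constrained-optimizer} becomes
\begin{equation*}
M_\gamma := \begin{bmatrix} \bs{Q}+\bs{\Delta} & \gamma\bs{\Delta} \\ \gamma\bs{\Delta} & \bs{Q}+\bs{\Delta} \end{bmatrix}.
\end{equation*}
Writing $M := \bs{Q}+\bs{\Delta} = \textup{diag}(Q_1,\ldots,Q_T,R_0,\ldots,R_{T-1})$, I would note that $M \succ 0$ because $Q_t, R_t \succ 0$, while $\bs{\Delta} \succcurlyeq 0$.

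The crux is a lower bound on $M_\gamma$. For any $[u;v]$ the associated quadratic form equals
\begin{equation*}
u^\top \bs{Q}u + v^\top\bs{Q}v + \left\|u\right\|_{\bs{\Delta}}^2 + \left\|v\right\|_{\bs{\Delta}}^2 + 2\gamma\langle u,v\rangle_{\bs{\Delta}}.
\end{equation*}
Applying the Cauchy--Schwarz inequality in the $\bs{\Delta}$-seminorm and the elementary estimate $a^2+b^2-2|\gamma|ab \geq (1-|\gamma|)(a^2+b^2)$ controls the cross term, giving $\left\|u\right\|_{\bs{\Delta}}^2+\left\|v\right\|_{\bs{\Delta}}^2+2\gamma\langle u,v\rangle_{\bs{\Delta}} \geq (1-|\gamma|)(\left\|u\right\|_{\bs{\Delta}}^2+\left\|v\right\|_{\bs{\Delta}}^2)$ whenever $|\gamma|\leq 1$. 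Since $\bs{Q}\succcurlyeq 0$ lets me absorb the remaining terms, the form is at least $(1-|\gamma|)(u^\top M u + v^\top M v) \geq (1-|\gamma|)\lambda_{\min}(M)\left\|[u;v]\right\|^2$. Hence $M_\gamma \succcurlyeq (1-|\gamma|)\lambda_{\min}(M)\,I$, which is strictly positive definite for $-1<\gamma<1$ and positive semidefinite for $-1\leq\gamma\leq 1$.

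With this bound in hand, the conclusion follows from Corollary \ref{cor:iterations}: for $-1<\gamma<1$ the inequality \eqref{eq:condition-contractive-constrained-optimizer} holds with $\epsilon>0$, so $\mathcal{A}$ is CON and the Picard--Banach iteration converges; for $-1\leq\gamma\leq 1$ it holds with $\epsilon\geq 0$, so $\mathcal{A}$ is NE and both the Krasnoselskij and Mann iterations converge. I expect the main obstacle to be exactly the lower bound on $M_\gamma$: because $\bs{\Delta}$ is only positive \emph{semi}definite (its input block vanishes), the cross-term estimate alone never yields a strictly positive form, and it must be combined carefully with the $\bs{Q}$-block to recover the factor $\lambda_{\min}(\bs{Q}+\bs{\Delta})>0$. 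I would also remark that the alternative hypotheses in Corollary \ref{cor:iterations}, namely $\bs{\Delta}\succ\bs{C}$ or $\bs{\Delta}\prec\bs{C}$, are unavailable here precisely because $\bs{\Delta}$ is singular, so the semidefiniteness route through \eqref{eq:condition-contractive-constrained-optimizer} is the one that must be taken.
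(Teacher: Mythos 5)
Your proposal is correct and follows the same overall route as the paper: substitute the LQ data from \eqref{eq:Qi-LQ} and \eqref{eq:mfc-rewritten-matrices}, observe that $\bs{\Delta}-\bs{C}=\gamma\bs{\Delta}$, verify the matrix inequality \eqref{eq:condition-contractive-constrained-optimizer} with $\epsilon>0$ for $|\gamma|<1$ and $\epsilon=0$ for $|\gamma|\leq 1$, and invoke Corollary \ref{cor:iterations}. The only place where you diverge is the verification of positive (semi)definiteness of $M_\gamma$: the paper applies a block permutation $\Pi$ to rewrite $M_\gamma$ as $\Pi^\top \textup{diag}\bigl(\left[\begin{smallmatrix}1 & \gamma \\ \gamma & 1\end{smallmatrix}\right]\otimes \tilde Q,\, I_2\otimes\tilde R\bigr)\Pi$ and reads off the Kronecker eigenvalues $(1\pm\gamma)\lambda_j(\tilde Q)$, whereas you bound the quadratic form directly via Cauchy--Schwarz in the $\bs{\Delta}$-seminorm and absorb the $\bs{Q}$-block using $1-|\gamma|\leq 1$. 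Both arguments are valid; the paper's version makes the exact eigenvalue structure (and hence the sharpness of the thresholds $\gamma=\pm 1$) transparent, while yours is more elementary and correctly isolates the one genuine subtlety, namely that $\bs{\Delta}$ alone is singular so the cross-term estimate must be combined with the $\bs{Q}$-block to recover $\lambda_{\min}(\bs{Q}+\bs{\Delta})>0$. Your closing remark that the FNE/SPC hypotheses $\bs{\Delta}\succ\bs{C}$ and $\bs{\Delta}\prec\bs{C}$ are unavailable here because $\bs{\Delta}-\bs{C}=\gamma\bs{\Delta}$ is singular is also accurate and explains why the corollary lists no separate Picard--Banach condition beyond $|\gamma|<1$.
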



\subsection{Production planning example}
Let us illustrate the LQ deterministic MF setting with a production planning example inspired by \cite[Section II.A]{huang:caines:malhame:07}. We consider $N$ firms supplying the same product to the market. Let $s_t^i \geq 0$ represent the production level of firm $i$ at time $t$. We assume that each firm can change its production according to the linear dynamics
\begin{align*}
s_{t+1}^{i} = s_t^i + u_t^i,
\end{align*}
where both the states and inputs are subject to heterogeneous constraints of the form $s_t^i \in [0,\bar s^i]$ and $ u_t^i \in [-\bar u^i,\bar u^i]$ for all $t \in \N$. 
We assume that the price of the product reads as
\begin{align*}
\textstyle p = p_0 - \rho \left(\frac{1}{N}\sum_{i=1}^N s^i\right),
\end{align*}
for $p_0, \rho >0$. 
Each firm seeks a production level $s^i$ proportional to the product price $p$, while facing the cost to change its production level (for example, for adding or removing production lines).
We can then formulate the associated LQ MF finite horizon cost function as
\begin{equation} \label{eq:J-x-u-z-numerical-example}
J\left( \bs{x}, \bs{u}, \bs{z} \right) := \sum_{t=0}^{T-1} \left ( s_{t+1} -   \gamma \left ( \eta + z_{t+1} \right) \right)^2 + r u_t ^2
\end{equation}
where $\eta := -p_0/\rho$, $\gamma:= - \rho$, $r> 0$, $\bs{s} = \left[ s_1, \ldots, s_T \right]^\top \in \R^{ T}$, $\bs{u} = \left[ u_0, \ldots, u_{T-1} \right]^\top \in \R^{T}$ and 
$\bs{z} = \left[ z_1, \ldots, z_T \right]^\top \in \R^{T}$.
Given a signal $\bs{z} \in \R^{T}$, each agent, $i = 1, \ldots, N$, solves a finite-horizon optimal tracking problem as defined in \eqref{eq:xi-ui-MFC}, with cost function $J$ in \eqref{eq:J-x-u-z-numerical-example}. 
For illustration, we consider the case of a heterogeneous population of firms where we randomly sample the upper bound $\bar{s}^i$ from a uniform distribution supported on $[0,\, 10]$ and $\bar{u}^i$ from a uniform distribution supported on $[0,\, \bar{s}^i/5]$. 
We consider the parameters $p_0=10$, $\rho = 1$, $T=20$, and hence $\gamma = -1$. The mapping $\mc{A}$ defined in \eqref{eq:fixed-point-mean} is then NE, thus the Krasnoleskij iteration in \eqref{eq:Krasnoselskij}  does guarantee convergence to a fixed point, according to Corollary \ref{cor:Huang-converges}.

For different population sizes $N$, we first numerically compute a fixed point $\bar{\bs{z}}$ of $\mc{A}$ using the Krasnoleskij iteration in \eqref{eq:Krasnoselskij} with parameter $\lambda=0.5$, and we hence compute the strategies 
$\left\{ \left( {\bs{s}}^{ i \, \star }( \bar{\bs{z}} ), { \bs{u}}^{ i \, \star }( \bar{\bs{z}} )  \right) \right\}_{i=1}^{N}$. We then verify that this is an $\varepsilon_N$-Nash equilibrium: for each firm $i$, we evaluate the individual cost $\bar{J}^i := J\left( \bs{s}^{ i \, \star }( \bar{\bs{z}} ), {\bs{u}}^{ i \, \star }( \bar{\bs{z}} ), \bar{\bs{z}} \right)$ and the actual optimal cost $J^{ i \, \star }$ under the knowledge of the production plan of the other firms at the fixed point $\bar{\bs{z}}$. 
In Figure \ref{fig:epsilon} we plot the maximum benefit $\varepsilon_N := \max_{ i \in \Z[1,N] } | J^{ i \, \star } - \bar{J}^i |$ that a firm could achieve by unilaterally deviating from the solution computed via the fixed point iteration, normalized by the optimal cost in the homogeneous case with expected constraints ($s^i \in [0,5]$, $u^i \in [-1, 1]$ $\, \forall i \in \Z[1,N]$). According to Theorem \ref{thm:epsilon:nash}, such benefit vanishes as the population size increases.
\begin{figure}[htp!]
\begin{center}
\includegraphics{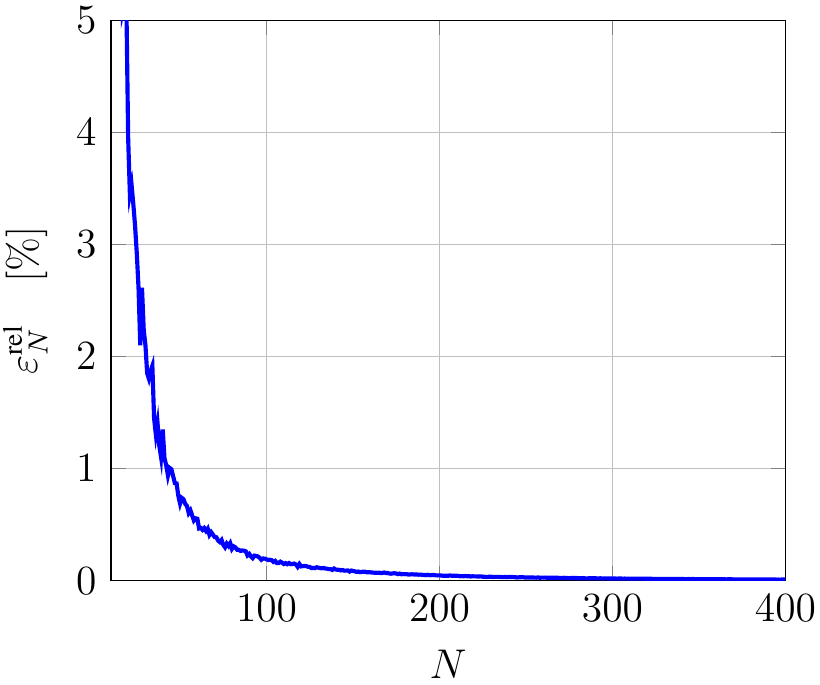}
\caption{As the population size $N$ increases, the maximum achievable individual cost improvement $\varepsilon_N$, relative to the optimal cost in the homogeneous case with expected constraints ($s^i \in [0,5]$, $u^i \in [-1, 1]$ $\, \forall i \in \Z[1,N]$), decreases to zero. For all population sizes, $N$ agents are randomly selected.}
\label{fig:epsilon}
\end{center}
\end{figure}

\subsection{Decentralized constrained charging control for large populations of plug-in electric vehicles} \label{sec:application-pevs}
As second control application, we investigate the problem of coordinating the charging of a large population of PEVs, introduced in \cite{ma:callaway:hiskens:13} and extended to the constrained case in \cite{parise:colombino:grammatico:lygeros:14}. 
For each PEV $i \in \Z[1,N]$, we consider the discrete-time, $t \in \N$, linear dynamics
\begin{equation*}
s_{t+1}^{i} = s_t^i + b^i u_t^i
\end{equation*}
where $s^i \in [0,1]$ is the state of charge, $u^i \in [0,1]$ is the charging control input and $b^i >0$ represents the charging efficiency.

The objective of each PEV $i$ is to acquire a charge amount $\gamma^i \in [0,1]$ within a finite charging horizon $T \in \N$, hence to satisfy the charging constraint\footnote{We could also consider more general convex constraints, for instance on the desired state of charge, multiple charging intervals, charging rates, vehicle-to-grid operations. However, we prefer to keep the same setting of \cite{ma:callaway:hiskens:13, parise:colombino:grammatico:lygeros:14} for simplicity.} $\sum_{t=0}^{T-1} u_t^i = \bs{1}^\top \bs{u}^i = \gamma^i$, 
while minimizing its charging cost $\sum_{t=0}^{T-1} p_t\left( \cdot \right) u_t^i = \bs{p}\left( \bs{\cdot} \right)^\top \bs{u}^i$, where $\bs{p}( \bs{\cdot} )^\top = \left[ p_0(\cdot), \ldots, p_{T-1}(\cdot)\right]^\top$ is the electricity price function over the charging horizon. 
We consider a dynamic pricing, where the price of electricity depends on the overall demand, namely the inflexible demand plus the aggregate PEV demand. In particular, in line with the (almost-affine) price function in \cite{ma:callaway:hiskens:13, parise:colombino:grammatico:lygeros:14},
we consider an affine price function $\bs{p}( \bs{z} ) := 2\left( a \bs{z} + \bs{c} \right)$, 
where $a>0$ represents the inverse of the price elasticity of demand and $\bs{c} \geq \bs{0}$ denotes the average inflexible demand.
The interest of each agent is to minimize its own charging cost $2( a \bs{z} + \bs{c} )^\top \bs{u}^i$, which however leads to a linear program with undesired discontinuous optimal solution.
Therefore, following \cite{ma:callaway:hiskens:13, parise:colombino:grammatico:lygeros:14}, we also introduce a quadratic relaxation term as follows.

The optimal charging control $\bs{u}^{i \, \star}$ of each PEV $i \in \Z[1,N]$, given the price signal $ \bs{z} = \left[ z_0, \ldots, z_{T-1} \right] \in \R^T$, is defined as
\begin{equation} \label{eq:optimal-charging-control}
\begin{array}{rrl}
\displaystyle \bs{u}^{i \, \star}( \bs{z} ) := & \displaystyle \arg \min_{ \bs{u} \in \R^T }  & \delta \left\| \bs{u} - \bs{z}   \right\|^2 + 2( a \bs{z} + \bs{c} )^\top \bs{u} \\ 
& \text{s.t. } & \bs{0} \leq \bs{u} \leq \bs{U}^i, \ \bs{1}^{\top} \bs{u} = \gamma_i,
\end{array} 
\end{equation}
where $\delta > 0$ and $\bs{U}^i \in \R_{\geq 0}^T$ is a vector of desired upper bounds on the charging inputs.
Note that the perturbation $\delta>0$ should be chosen small to approximate the original linear cost $2( a \bs{z} + \bs{c} )^\top \bs{u}^i$. We refer to \cite[Section V]{parise:colombino:grammatico:lygeros:14} for a numerical evidence of the beneficial effect of choosing a small $\delta>0$ for the perturbed cost in \eqref{eq:optimal-charging-control}. 

In view of Theorem \ref{thm:epsilon:nash}, a solution to the corresponding MF control problem is a fixed point of the mapping
\begin{equation} \label{eq:mean-control}
\textstyle \mathcal{A}( \bs{z} ) := \frac{1}{N} \sum_{i=1}^{N} \bs{u}^{i \, \star}( \bs{z} )
\end{equation}
which represents the average among the optimal charging control inputs $\{ \bs{u}^{i \, \star}( \bs{z} ) \}_{i=1}^{N}$.



Since the cost function in \eqref{eq:optimal-charging-control} is a particular case of the general cost function in \eqref{eq:constrained-optimizer}, namely with $Q=0$, $\Delta = \delta I$, $C = a I$, we can establish conditions on $\delta > 0$ under which a specific fixed point iteration, that in this context represents a price update feedback law, converges to a MF almost-Nash solution for the constrained charging control problem. In particular, the Mann iteration in \eqref{eq:Mann} always converges to a fixed point of the aggregation mapping and hence solves the constrained deterministic MF control problem in the limit of infinite population size.

\vspace{0.2cm}
\begin{corollary}[Decentralized PEV charging control]
\label{cor:Callaway-converges}
The following iterations and conditions guarantee global convergence to a fixed point of $\mathcal{A}$ in \eqref{eq:mean-control}, where $\bs{u}^{i \, \star}$ is as in \eqref{eq:optimal-charging-control} for all $i \in \Z[1,N]$: \vspace{0.1cm} \\
\begin{tabular}{lcll}
$1$. Picard--Banach &  \eqref{eq:Picard-Banach} & if & $\delta > a/2$;\\
$2$. Krasnoselskij &  \eqref{eq:Krasnoselskij} & if & $\delta \geq a/2$;\\
$3$. Mann &  \eqref{eq:Mann} & if & $\delta>0$. \\
\end{tabular}

{\hfill $\square$}
\end{corollary}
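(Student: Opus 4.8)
The plan is to recognize Corollary~\ref{cor:Callaway-converges} as a direct specialization of Corollary~\ref{cor:iterations}, obtained by matching the charging cost in \eqref{eq:optimal-charging-control} to the general quadratic cost \eqref{cost}. First I would record the identification: expanding $\delta\|\bs{u}-\bs{z}\|^2$ and comparing with \eqref{cost} shows that the data are $Q=\bs{0}$, $\Delta=\delta I$, $C=aI$ and $c=\bs{c}$, while the feasible set $\mathcal{X}^i=\{\bs{u}\in\R^T : \bs{0}\le\bs{u}\le\bs{U}^i,\ \bs{1}^\top\bs{u}=\gamma_i\}$ is a (nonempty) polytope, hence compact and convex, and the averaging in \eqref{eq:mean-control} corresponds to $a_1=\cdots=a_N=1$, so Standing Assumption~\ref{ass:uniform-compactness} holds with $\bar a=1$. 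This places the problem squarely in the setting of Section~\ref{ref:main-results}, so it suffices to translate the abstract conditions of Corollary~\ref{cor:iterations} into inequalities on $\delta$ and $a$.

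Next I would evaluate the matrix inequality \eqref{eq:condition-contractive-constrained-optimizer} for these data. Here $Q+\Delta=\delta I$ and $\Delta-C=(\delta-a)I$, so the relevant $2T$-dimensional matrix takes the block form
\begin{equation*}
\begin{bmatrix} \delta I & (\delta-a)I \\ (\delta-a)I & \delta I \end{bmatrix}.
\end{equation*}
Exploiting its Kronecker structure $I_2\otimes(\delta I)+\left[\begin{smallmatrix}0&1\\1&0\end{smallmatrix}\right]\otimes((\delta-a)I)$, or equivalently testing the vectors $[v;v]$ and $[v;-v]$, its spectrum is $\{\,\delta+(\delta-a),\ \delta-(\delta-a)\,\}=\{\,2\delta-a,\ a\,\}$. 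Consequently the matrix satisfies \eqref{eq:condition-contractive-constrained-optimizer} with $\epsilon>0$ if and only if $\min\{2\delta-a,\,a\}>0$, i.e.\ $\delta>a/2$ (recall $a>0$), and with $\epsilon\ge 0$ if and only if $\delta\ge a/2$. Moreover $\Delta\prec C$ reads $\delta I\prec aI$, that is $\delta<a$.

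Finally I would feed these three equivalences into Corollary~\ref{cor:iterations}. The Picard--Banach case requires \eqref{eq:condition-contractive-constrained-optimizer} with $\epsilon>0$ (the alternative $\Delta\succ C\succcurlyeq -Q$ only yields the stronger $\delta>a$, already covered), giving $\delta>a/2$; the Krasnoselskij case requires \eqref{eq:condition-contractive-constrained-optimizer} with $\epsilon\ge 0$, giving $\delta\ge a/2$; and the Mann case holds whenever either \eqref{eq:condition-contractive-constrained-optimizer} with $\epsilon\ge 0$ or $\Delta\prec C$ is satisfied, i.e.\ whenever $\delta\ge a/2$ or $\delta<a$. Since $a/2<a$, the two intervals $[a/2,\infty)$ and $(0,a)$ overlap and cover all of $(0,\infty)$, so Mann converges for every $\delta>0$. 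I do not expect a genuine obstacle here: the argument is a substitution followed by an elementary spectral computation, and the only point deserving care is verifying that the union of the two Mann conditions exhausts $\delta>0$, which hinges precisely on the strict inequality $a/2<a$ guaranteed by $a>0$.
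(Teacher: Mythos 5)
Your proposal is correct and follows essentially the same route as the paper: specialize Corollary~\ref{cor:iterations} with $Q=\bs{0}$, $\Delta=\delta I$, $C=aI$, verify \eqref{eq:condition-contractive-constrained-optimizer} (the paper uses a Schur complement for the strict case and a Kronecker argument at $\delta=a/2$, whereas you compute the spectrum $\{2\delta-a,\,a\}$ directly, which handles both at once), and invoke $\Delta\prec C$, i.e.\ $\delta<a$, so that the SPC and NE regimes together cover all $\delta>0$. The only cosmetic difference is your unified eigenvalue computation in place of the paper's two-case argument.
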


In \cite{ma:callaway:hiskens:13}, only the Picard--Banach iteration is considered, for some values of $\delta>a/2$.
For small values of $\delta$, it is shown in both \cite{ma:callaway:hiskens:13} and \cite{parise:colombino:grammatico:lygeros:14} that the Picard--Banach iteration causes permanent price oscillations. On the other hand, in \cite{parise:colombino:grammatico:lygeros:14} it is observed in simulation that the Mann iteration does converge. Corollary \ref{cor:Callaway-converges} hence provides theoretical support for this observation. 

Using the same numerical values as in \cite{ma:callaway:hiskens:13}, Figure \ref{fig:valley_filling_1} shows that, if we choose the parameter $\delta > 0$ small enough, we recover the valley-filling solution, desirable in the case without charging upper bounds \cite[Lemma 3.1]{ma:callaway:hiskens:13}. For the same case, we show in Figure \ref{fig:valley_filling_2} that the Picard--Banach iteration oscillates indefinitely, while the Mann iteration converges.
\begin{figure}[h!]
\begin{center}
\includegraphics{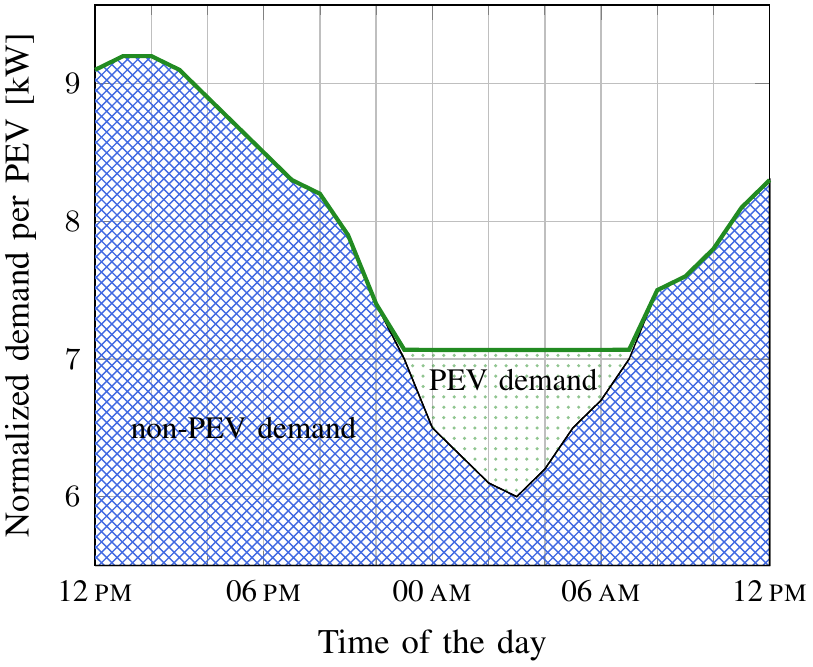}
\caption{Charging setting without upper bounds ($\delta=10^{-4}$): the Mann iteration  converges to a desirable valley-filling solution.}
\label{fig:valley_filling_1}
\end{center}
\end{figure}

\begin{figure}[h!]
\begin{center}
\includegraphics{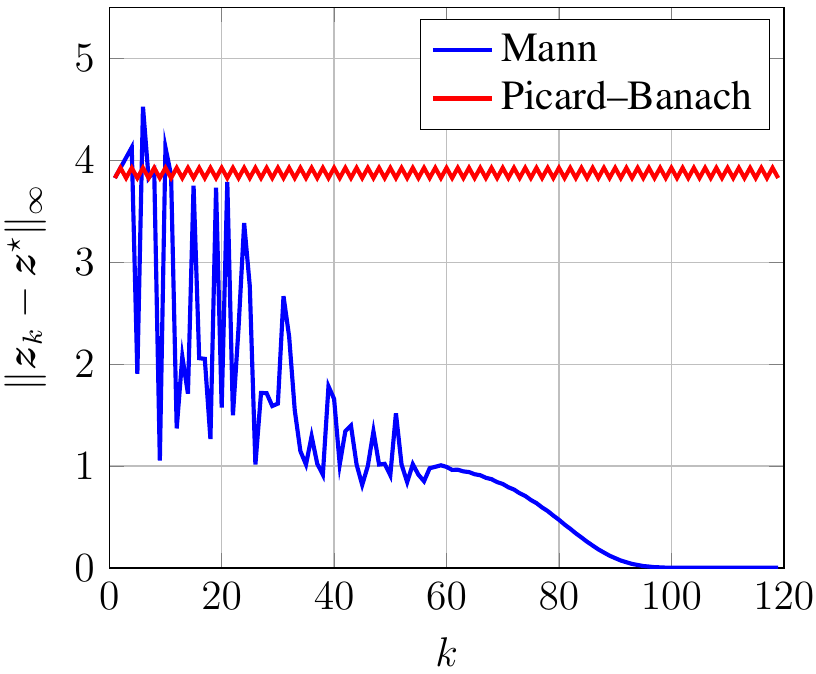}
\caption{Charging setting without upper bounds ($\delta=10^{-4}$): the Picard--Banach iteration  oscillates in a limit cycle while the Mann iteration converges to a desirable valley-filling solution $\bs{z}^\star$.}
\label{fig:valley_filling_2}
\end{center}
\end{figure}

We refer to \cite{parise:colombino:grammatico:lygeros:14} for further discussions and numerical simulations. Application to realistic PEV case studies is topic of current work.

\section{Conclusion and Outlook} \label{sec:conclusion}
\subsection*{Conclusion}

We have considered mean field control approaches for large populations of systems, consisting of agents with different individual behaviors, constraints and interests, and affected by the aggregate behavior of the overall population. We have addressed mean field control theory for problems with heterogeneous convex constraints, for instance arising from agents with linear dynamics subject to convex state and control constraints. We have proposed several model-free decentralized feedback iterations for constrained mean field control problems, as summarized in Table \ref{table:Table}, 
with guaranteed global convergence to a mean field Nash equilibrium for large population sizes.
We believe that our methods and results open several research directions in mean field control theory and inspire novel methods to various applications.

\setlength{\tabcolsep}{4pt}
\renewcommand{\arraystretch}{1.2}

\begin{table}
\caption{Conditions on the problem data, corresponding regularity properties of the aggregation mapping and iterations that ensure convergence to a fixed point of the aggregation mapping.}
\label{table:Table}
\begin{center}
\begin{tabular}{ @{} cc ccc @{} } 
\bottomrule
\multicolumn{5}{>{\columncolor[gray]{.95}}l}{Constrained deterministic MF control with quadratic cost (Sections \ref{sec:problem}, \ref{sec:quest})} \\
\toprule
 & & \multicolumn{3}{c}{Feedback iterations} \\
\cmidrule{3-5}
Condition & Property& Picard--Banach & Krasnoselskij  & Mann  \\
\toprule 
$ \left[ \begin{smallmatrix} Q+\Delta & \Delta-C \\ ( \Delta - C )^\top & Q + \Delta  \end{smallmatrix} \right] \succ 0 $  & CON & $\checkmark$ & $\checkmark$ & $\checkmark$ \vspace{0.1cm}\\
$ \Delta \succ C \succcurlyeq -Q $ & FNE  & $\checkmark$ & $\checkmark$ & $\checkmark$ \vspace{0.1cm} \\
$ \left[ \begin{smallmatrix} Q+\Delta & \Delta-C \\ ( \Delta - C )^\top & Q + \Delta  \end{smallmatrix} \right] \succcurlyeq 0 $  & NE &  & $\checkmark$ & $\checkmark$ \vspace{0.1cm} \\
$\Delta \prec C$ & SPC &  &  & $\checkmark$ \\
\toprule
\bottomrule
\multicolumn{5}{>{\columncolor[gray]{.95}}l}{Constrained LQ deterministic MF control (Sections \ref{sec:LQ-motivation}, \ref{sec:Discrete-time constrained linear quadratic mean field control})}\\
\toprule
 & & \multicolumn{3}{c}{Feedback iterations} \\
\cmidrule{3-5}
Condition & Property& Picard--Banach & Krasnoselskij  & Mann  \\
\toprule 
$ -1 < \gamma < 1  $     & CON & $\checkmark$ & $\checkmark$ & $\checkmark$ \\
$ -1 \leq \gamma \leq 1$ & NE    &   & $\checkmark$ & $\checkmark$ \\
\toprule
\bottomrule
\multicolumn{5}{>{\columncolor[gray]{.95}}l}{Constrained MF PEV charging control (Section \ref{sec:application-pevs})} \\
\toprule
 & & \multicolumn{3}{c}{Feedback iterations} \\
\cmidrule{3-5}
Condition & Property& Picard--Banach & Krasnoselskij  & Mann  \\
\toprule 
$ \delta > a/2$  & CON & $\checkmark$ & $\checkmark$ & $\checkmark$ \\
$ \delta \geq a/2$ & NE  &   & $\checkmark$ & $\checkmark$ \\
$ \delta > 0 $ & SPC &  &  & $\checkmark$ \\
\bottomrule

\end{tabular}
\end{center}
\end{table}

\setlength{\tabcolsep}{6pt}
\renewcommand{\arraystretch}{1}

\subsection*{Outlook on extensions and applications}

Most of the mathematical results from operator theory we adopted for finite-dimensional Euclidean spaces, also hold for infinite-dimensional Hilbert spaces. Therefore, our technical results can be potentially extended to infinite-horizon MF control problems.

We have considered agents with homogeneous cost functions, coupled via the aggregate population behavior. 
The cases of heterogeneous cost functions and couplings in the constraints are possible generalizations,
motivated by setups where different agents may have different local interests and local mutual constraints. 
Since we have considered agents with a strictly-convex quadratic cost function, a valuable generalization would be the case of general convex cost function.

As we have addressed a deterministic setting, inspired by the deterministic agent dynamics in \cite{ma:callaway:hiskens:13, bauso:pesenti:13}, a valuable extension would be a stochastic setting in the presence of state and input constraints. For instance, the parameters of each agent can be thought as extracted from a probability distribution \cite[Section V]{huang:caines:malhame:07}, and/or a zero-mean random input can enter linearly in the dynamics \cite[Equation 2.1]{huang:caines:malhame:07}.

The concept of social global optimality has not been considered in this paper. Following the lines of \cite[Section IV]{huang:caines:malhame:12}, it would be valuable to show, under suitable technical conditions, that the MF structure allows one to coordinate efficiently decentralized constrained optimization schemes.

Our constrained MF setup can be also extended in many transverse directions. For instance, the effect of local heterogeneous constraints can be studied in MF games with leader-follower (major-minor) agents \cite{nourian:caines:malhame:huang:12} and in coalition formation MF games \cite{kizilkale:caines:12}.
Furthermore, we believe that our constrained setting and methods can be also exploited in network games with local interactions \cite{bauso:giarre:pesenti:08, huang:caines:malhame:10}.

Applications of our methods and results include decentralized control and game-theoretic coordination in large-scale systems. Among others, application domains that can be further explored in view of our constrained MF setup are dynamic demand-side management of aggregated loads in power grids
\cite{mohsenian-rad:10, chen:li:louie:vucetic:14, bagagiolo:bauso:14, ma:hu:spanos:14}, congestion control over networks \cite{barrera:garcia:15}, synchronization and frequency regulation among populations of coupled oscillators \cite{yin:mehta:meyn:shanbhag:12, dorfler:bullo:14}.
Another application field suited for our constrained MF control approach is the supply-demand regulation in energy markets, where agents with heterogeneous behaviors and interests, wish to efficiently buy and/or sell services and energy \cite{kizilkale:mannor:caines:12}. 


%
%
%

\section*{Acknoledgements}
The authors would like to thank Basilio Gentile for fruitful discussions on the topic. 

\appendix

\subsection{Further mathematical tools from operator theory} \label{app:operator-theory}
In this section, we present some useful operator theory definitions, adapted to finite-dimensional Hilbert spaces from \cite{bauschke:combettes, berinde}.
For completeness, we present the most general known fixed point iteration, that is, the Ishikawa iteration in \eqref{eq:Ishikawa}, which guarantees convergence to a fixed point of a (non-strictly) PseudoContractive (PC) mapping \cite[Theorem 5.1]{berinde}, as formalized next \cite[Remark 3, pp. 12--13]{berinde}.

\vspace{0.2cm}
\begin{definition}[{PseudoContractive mapping}] \label{def:PC}
A mapping $f: \R^n \rightarrow \R^n$ is pseudocontractive (PC) in $\mathcal{H}_P$ if
\begin{equation}
\left\| f(x) - f(y) \right\|_P^2 \leq \left\| x-y\right\|_P^2  + \left\| f(x) - f(y) - \left(x-y \right)\right\|_P^2 
\end{equation}
for all $x, y \in \R^n$.
{\hfill $\square$}
\end{definition}

If a mapping $f: \mathcal{C} \rightarrow \mathcal{C}$ is PC and Lipschitz in $\mathcal{H}_P$, with $\mathcal{C} \subseteq \R^n$ compact and convex, then the Ishikawa iteration
\begin{equation}\label{eq:Ishikawa}
z_{(k+1)} =  (1-\alpha_k) z_{(k)} + \alpha_k f \left( (1-\beta_k) z_{(k)} + \beta_k f\left( z_{(k)} \right) \right)
\end{equation}
where $\left(\alpha_k\right)_{k=0}^{\infty}$, $\left(\beta_k\right)_{k=0}^{\infty}$ are such that $0 \leq \alpha_k \leq \beta_k \leq 1 \ \forall k \geq 0$, $\lim_{k\rightarrow \infty} \beta_k = 0$ and $\sum_{k=0}^{\infty} \alpha_k \beta_k = \infty $, converges, for any initial condition $z_{(0)} \in \mathcal{C}$, to a fixed point of $f$ \cite[Theorem 5.1]{berinde}.

We notice that an SPC mapping is PC as well, therefore the Ishikawa iteration in \eqref{eq:Ishikawa} can be used in place of the Mann iteration in Corollary \ref{cor:iterations}. However, unlike the Mann iteration, in general there is no known convergence rate for the Ishikawa iteration, and in fact the convergence is usually much slower compared to the Mann iteration.
In this paper we have considered MF problems in which the aggregation mapping is at least SPC; an open question is whether there exist MF problems in which the aggregation mapping is PC, but not SPC, so that the Ishikawa iteration becomes necessary.

As exploited in the proofs of the main results, both SPC in Definition \ref{def:SPC} and PC in Definition \ref{def:PC} can be characterized in terms of monotone mappings, according to the following definitions and results \cite[Definition 1.14, p. 13]{berinde}, \cite[Definition 20.1]{bauschke:combettes}.

\vspace{0.2cm}
\begin{definition}[{Monotone mapping}] \label{def:MON}
A mapping $f: \R^n \rightarrow \R^n$ is strongly monotone (SMON) in $\mathcal{H}_P$ if there exists $\epsilon > 0$ such that
\begin{equation} \label{eq:SMON-inequality}
\left( f(x) - f(y) \right)^\top P \left( x-y\right) \geq \epsilon \left\| x-y\right\|_P^{2}
\end{equation}
for all $x, y \in \R^n$. It is monotone (MON) in $\mathcal{H}_P$ if \eqref{eq:SMON-inequality} holds with $\epsilon=0$.
{\hfill $\square$}
\end{definition}

\vspace{0.2cm}
\begin{lemma} \label{lem:MON-SAC}
If $f: \R^n \rightarrow \R^n$ is MON in $\mathcal{H}_P$ and $g: \R^n \rightarrow \R^n$ is SMON in $\mathcal{H}_P$, then $f+g$ is SMON in $\mathcal{H}_P$.
{\hfill $\square$}
\end{lemma}

\vspace{0.2cm}
\begin{proof}
It follows from Definition \ref{def:MON} that there exists $\epsilon>0$ such that, for all $x,y \in \R^n$:
\begin{equation*}
\begin{array}{l}
\left( f(x)+g(x) - \left( f(y) + g(y) \right) \right)^\top P \left( x-y\right) \\
\ = \left( f(x) - f(y) \right)^\top P \left( x-y\right) + \left( g(x) - g(y)\right)^\top P \left( x-y\right) \\
\ \geq \epsilon \left\| x-y\right\|_P^2.
\end{array}
\end{equation*}
\end{proof}

\vspace{0.2cm}
\begin{remark}
$f$ FNE $\Longrightarrow$ $f$ MON \cite[Example 20.5]{bauschke:combettes}; $f$ PC $\Longleftrightarrow$ $\text{Id} - f$ MON \cite[Example 20.8]{bauschke:combettes}.
{\hfill $\square$}
\end{remark}

\vspace{0.2cm}
\begin{lemma} \label{lem:from-SAC-to-SPC}
For any $f: \R^n \rightarrow \R^n$, the mapping $\text{Id} - f$ is SPC in $\mathcal{H}_P$ if and only if there exists $\epsilon > 0$ such that $ \left( f(x) - f(y) \right)^\top P \left( x-y\right) \geq \epsilon \left\| f(x) - f(y) \right\|_P^2$ for all $x, y \in \R^n$.
If $f$ is Lipschitz continuous and SMON in $\mathcal{H}_P$, then $\text{Id} - f$ is SPC in $\mathcal{H}_P$.
{\hfill $\square$}
\end{lemma}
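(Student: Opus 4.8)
The plan is to prove both assertions by rewriting everything in the inner-product form of the strict pseudocontraction inequality and then reading off the constants. First I would set $g := \textup{Id} - f$ and abbreviate $u := x-y$ and $v := f(x)-f(y)$, so that $g(x)-g(y) = u-v$ and, crucially, $\bigl(g(x)-g(y)\bigr) - (x-y) = -v$. Substituting these into Definition \ref{def:SPC}, read in the norm $\left\| \cdot \right\|_P$, the SPC condition for $g$ becomes: there exists $\rho < 1$ such that
\[
\left\| u - v \right\|_P^2 \leq \left\| u \right\|_P^2 + \rho \left\| v \right\|_P^2
\]
for all $x,y \in \R^n$.

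Next I would expand the left-hand side using the symmetry of $P$, namely $\left\| u-v \right\|_P^2 = \left\| u \right\|_P^2 - 2\, u^\top P v + \left\| v \right\|_P^2$, and cancel the common term $\left\| u \right\|_P^2$. This collapses the inequality to $2\, u^\top P v \geq (1-\rho)\left\| v \right\|_P^2$. Since $u^\top P v = (f(x)-f(y))^\top P (x-y)$, again by symmetry of $P$, and since $\rho < 1$ is equivalent to $\tfrac{1-\rho}{2} > 0$, the substitution $\epsilon := \tfrac{1-\rho}{2}$ (equivalently $\rho = 1 - 2\epsilon$) establishes both directions of the equivalence at once: an SPC constant $\rho<1$ produces a valid $\epsilon>0$, and conversely any $\epsilon>0$ yields the admissible constant $\rho = 1-2\epsilon < 1$. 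This proves the first assertion.

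For the second assertion I would invoke the equivalence just obtained and manufacture the required $\epsilon>0$ from the two structural constants. Strong monotonicity in $\mathcal{H}_P$ gives $(f(x)-f(y))^\top P (x-y) \geq \mu \left\| x-y \right\|_P^2$ for some $\mu>0$, while Lipschitz continuity gives $\left\| f(x)-f(y) \right\|_P^2 \leq L^2 \left\| x-y \right\|_P^2$, hence $\left\| x-y \right\|_P^2 \geq L^{-2}\left\| f(x)-f(y) \right\|_P^2$. Chaining the two bounds yields
\[
(f(x)-f(y))^\top P (x-y) \geq \mu \left\| x-y \right\|_P^2 \geq \frac{\mu}{L^2}\left\| f(x)-f(y) \right\|_P^2 ,
\]
so the inner-product characterization holds with $\epsilon = \mu/L^2 > 0$, and the first part then gives that $\textup{Id}-f$ is SPC in $\mathcal{H}_P$.

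I do not expect a serious obstacle: the argument is entirely algebraic and requires no fixed-point or compactness input. The only points that demand care are the repeated use of the symmetry of $P$ (to merge the cross terms and to identify $u^\top P v$ with $(f(x)-f(y))^\top P (x-y)$), and the bookkeeping of the constant correspondence $\epsilon = (1-\rho)/2$, which must be checked to respect the strict inequality $\rho < 1 \Leftrightarrow \epsilon > 0$ in \emph{both} directions of the equivalence.
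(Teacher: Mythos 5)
Your proposal is correct and follows essentially the same route as the paper's own proof: rewrite the SPC inequality for $\textup{Id}-f$, expand the $P$-norm of the difference, cancel $\left\|x-y\right\|_P^2$, and identify $\epsilon = (1-\rho)/2$; then chain the SMON and Lipschitz bounds to produce the required constant. The only differences are cosmetic (your $u,v$ abbreviations and writing the squared Lipschitz constant as $L^2$ where the paper absorbs it into a single $L$).
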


\begin{proof}
By Definition \ref{def:SPC}, $\text{Id} - f$ is SPC if there exists $\rho < 1$ such that
$\left\| f(x) - f(y) - (x-y)\right\|_P^2 \leq \left\| x-y\right\|_P^2 + \rho \left\| f(x) - f(y)\right\|_P^2$ for all $x, y \in \R^n$. Equivalently, since 
$\left\| f(x) - f(y) - (x-y)\right\|_P^2  = \left\| f(x) - f(y) \right\|_P^2 + \left\| x-y\right\|_P^2 - 2 \left( f(x) - f(y) \right)^\top P (x-y) $, we have 
\begin{multline*}
\left\| f(x) - f(y)\right\|_P^2 - 2 \left( f(x) - f(y) \right)^\top P ( x-y ) \leq \rho \left\| f(x) - f(y)\right\|_P^2 \\
\Longleftrightarrow \ \textstyle \frac{1-\rho}{2} \left\| f(x) - f(y)\right\|_P^2 \leq \left( f(x) - f(y) \right)^\top P ( x-y )
\end{multline*}
for all $x, y \in \R^n$, which proves the first statement with $\epsilon = \frac{1-\rho}{2}$.
If $f$ is Lipschitz and SMON then there exist $L, \epsilon > 0$ such that
$ \epsilon \left\| f(x) - f(y) \right\|_P^2 \leq \epsilon L \left\| x - y\right\|_P^2 \leq L \left( f(x) - f(y) \right)^\top P ( x- y )$ for all $x, y \in \R^n$. Therefore, we have $\left( f(x) - f(y) \right)^\top P ( x- y ) \geq \frac{\epsilon}{L}  \left\| f(x) - f(y) \right\|_P^2$, which implies that $\text{Id} - f$ is SPC from the previous part of the proof.
\end{proof}

\subsection*{Regularity of affine mappings}
We next present necessary and sufficient conditions to characterize the regularity of affine mappings. 
Some of these equivalences are exploited in Appendix \ref{app:proofs}. The statements could be further exploited to show which fixed point iteration can be used to solve the unconstrained LQ deterministic MF control problem from Section \ref{sec:Discrete-time constrained linear quadratic mean field control}.

\vspace{0.2cm}
\begin{lemma}[Regularity of affine mappings] \label{lem:nonexpansive-affine}
The following equivalencies hold true for any mapping $f: \R^n \rightarrow \R^n$ defined as $f(x) := A x + b$, for some $A \in \R^{n \times n}$ and $b \in \R^n$. \vspace{0.1cm} \\
\begin{tabular}{lllc}
$1$. CON & in $\mathcal{H}_P$ & $\Longleftrightarrow$ & $A^\top P A - P \prec 0$ \\
$2$. NE & in $\mathcal{H}_P$ & $\Longleftrightarrow$ &   $A^\top P A - P \preccurlyeq 0$ \\
$3$. FNE & in $\mathcal{H}_P$ & $\Longleftrightarrow$ &  $2 A^\top P A \preccurlyeq A^\top P + P A $ \\
$4$. SMON & in $\mathcal{H}_P$ & $\Longleftrightarrow$ & $A^\top P + P A \succ 0$ \\
$5$. MON & in $\mathcal{H}_P$ & $\Longleftrightarrow$ &   $A^\top P + P A \succcurlyeq 0$ \\
$6$. PC & in $\mathcal{H}_P$ & $\Longleftrightarrow$ &   $A^\top P + P A \preccurlyeq 2 P$ \\
\end{tabular}

{\hfill $\square$}
\end{lemma}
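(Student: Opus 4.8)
The plan is to reduce all six equivalences to a single observation and then expand each operator-theoretic definition into a quadratic form. Since $f$ is affine, $f(x)-f(y)=A(x-y)$ for all $x,y\in\R^n$; writing $w:=x-y$ and letting $x,y$ range over $\R^n$, the vector $w$ ranges over all of $\R^n$. Each defining inequality in \eqref{eq:CON}--\eqref{eq:SMON-inequality} and in Definition \ref{def:PC} therefore turns into the requirement that a homogeneous quadratic form $w^\top M w$ have a fixed sign for every $w\in\R^n$, which is by definition the matrix inequality $M\succcurlyeq 0$ or $M\preccurlyeq 0$. The whole proof is then a bookkeeping exercise identifying the matrix $M$ for each property.

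I would carry out the nonstrict items first, since they are purely algebraic. For NE (item $2$), squaring \eqref{eq:NE} gives $w^\top A^\top P A\,w\le w^\top P w$ for all $w$, i.e.\ $A^\top P A-P\preccurlyeq 0$. For MON (item $5$), after symmetrizing the scalar $w^\top A^\top P w=\tfrac12 w^\top(A^\top P+P A)w$, inequality \eqref{eq:SMON-inequality} with $\epsilon=0$ becomes $A^\top P+P A\succcurlyeq 0$. For FNE (item $3$), I substitute $f(x)-f(y)-(x-y)=(A-I)w$ into \eqref{eq:FNE} and expand $\|(A-I)w\|_P^2=w^\top(A^\top P A-A^\top P-P A+P)w$; the $w^\top A^\top P A\,w$ contributions combine to give $2A^\top P A\preccurlyeq A^\top P+P A$. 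For PC (item $6$) the same substitution into Definition \ref{def:PC} leaves $w^\top(A^\top P+P A)w\le 2\,w^\top P w$, i.e.\ $A^\top P+P A\preccurlyeq 2P$; alternatively this follows from the characterization $f$ PC $\iff \textup{Id}-f$ MON applied through item $5$ to the affine map $\textup{Id}-f$, whose matrix is $I-A$ (this gives $(I-A)^\top P+P(I-A)\succcurlyeq 0$, which rearranges to the same condition).

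It then remains to handle the two strict items, CON (item $1$) and SMON (item $4$), whose definitions quantify an explicit positive modulus ($\epsilon\in(0,1]$, resp.\ $\epsilon>0$). The extra ingredient is that, since $P\succ 0$, the set $\{w\in\R^n:\|w\|_P=1\}$ is compact, so a continuous quadratic form that is strictly negative (resp.\ positive) at every such $w$ attains a strictly negative (resp.\ positive) extremum and hence admits a uniform modulus; the converse implication is immediate. Thus for SMON the existence of $\epsilon>0$ with $\tfrac12 w^\top(A^\top P+P A)w\ge\epsilon\,w^\top P w$ for all $w$ is equivalent to $A^\top P+P A\succ 0$, and for CON the squared form of \eqref{eq:CON}, namely $w^\top A^\top P A\,w\le(1-\epsilon)^2 w^\top P w$, admits some $\epsilon\in(0,1]$ exactly when $A^\top P A-P\prec 0$.

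The main obstacle is not the mechanical quadratic-form expansions, which are routine, but making this strict-versus-nonstrict correspondence rigorous, i.e.\ matching ``there exists a positive modulus'' with a strict matrix inequality (and, in the CON case, tracking that $1-\epsilon\in[0,1)$ is equivalent to a generalized eigenvalue bound strictly below $1$). This is precisely the step that uses finite dimensionality, through compactness of the $P$-unit sphere, or equivalently through finiteness of the generalized spectra of the matrix pencils $(A^\top P A,P)$ and $(A^\top P+P A,P)$.
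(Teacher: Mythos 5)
Your proposal is correct and follows essentially the same route as the paper: each definition is expanded into a homogeneous quadratic form in $w=x-y$ and read off as a matrix inequality, with items $1$--$6$ matching the paper's computations exactly. The only cosmetic difference is in the strict cases (CON, SMON), where you invoke compactness of the $P$-unit sphere while the paper equivalently uses $P\succ 0$ to pass between the existence of a modulus $\epsilon>0$ with $A^\top P A - P \preccurlyeq -(2\epsilon-\epsilon^2)P$ (resp.\ $A^\top P+PA\succcurlyeq 2\epsilon P$) and a strict matrix inequality.
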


\vspace{0.2cm}
\begin{proof}
The mapping $f$ is CON in $\mathcal{H}_P$ if and only if there exists $\epsilon \in (0,1]$ such that
$ \left\| f(x) - f(y)\right\|_P^2 = \left\| A(x-y)\right\|_P^2 \leq (1-\epsilon)^2 \left\| x-y\right\|_P^2$ for all $x,y \in \R^n$; \\ 
equivalently,
$\left( x-y\right)^\top A^\top P A \left( x-y\right) \leq (1- \epsilon)^2 \left( x-y\right)^\top P \left( x-y\right)$ for all $x,y \in \R^n$, that is $A^\top P A \preccurlyeq (1-\epsilon)^2 \, P \Leftrightarrow A^\top P A - P \preccurlyeq  -( 2 \epsilon - \epsilon^2 ) P$. Since $P \succ 0$, the existence of $\epsilon >0$ such that the latter matrix inequality holds is equivalent to the existence of $\varepsilon > 0$ such that $A^\top P A - P \preccurlyeq - \varepsilon I$.
An analogous proof with $\epsilon = \varepsilon = 0$ shows that the mapping $f$ is NE in $\mathcal{H}_P$ if and only if $A^\top P A - P\preccurlyeq 0$.

The mapping $f$ is FNE in $\mathcal{H}_P$ if and only if $ \left\| f(x) - f(y)\right\|_P^2 = \left\| A(x-y)\right\|_P^2 \leq \left\| x-y\right\|_P^2 -  \left\| f(x) - f(y) - (x-y)\right\|_P^2$ for all $x,y \in \R^n$. Equivalently, we get 
$ \left( x-y\right)^\top A^\top P A \left( x-y\right) \leq \left( x-y\right)^\top P \left( x-y\right) - \left( x-y\right)^\top \left( A - I\right)^\top P \left( A - I\right) \left( x-y\right) $ for all $x,y \in \R^n$, that is $A^\top P A \preccurlyeq P - (A-I)^\top P (A-I) = P - A^\top P A + A^\top P + P A - P \Leftrightarrow 2 A^\top P A \preccurlyeq A^\top P + P A$.

The mapping $f$ is SMON in $\mathcal{H}_P$ if and only if there exists $\epsilon > 0$ such that
$\left( f(x) - f(y) \right)^\top P (x-y) = (x-y)^\top A^\top P (x-y) \geq \epsilon \left\| x-y\right\|_P^2 = \epsilon (x-y)^\top P (x-y)$ for all $x,y \in \R^n$, that is equivalent to $\frac{1}{2} \left( A^\top P + P A \right) \succcurlyeq \epsilon P$. Since $P \succ 0$, the existence of $\epsilon >0$ such that the latter matrix inequality holds is equivalent to the existence of $\varepsilon > 0$ such that $A^\top P + P A \succcurlyeq \varepsilon I$.
An analogous proof with $\epsilon = \varepsilon = 0$ shows that the mapping $f$ is MON in $\mathcal{H}_P$ if and only if $A^\top P + P A \succcurlyeq 0$.

The mapping $f$ is PC in $\mathcal{H}_P$ if and only if $ \left\| f(x) - f(y)\right\|_P^2 = \left\| A(x-y)\right\|_P^2 \leq \left\| x-y\right\|_P^2 +  \left\| f(x) - f(y) - (x-y)\right\|_P^2 = \left\| x-y\right\|_P^2 +  \left\| (A-I)(x-y) \right\|_P^2$ for all $x,y \in \R^n$. Equivalently, we get
$\left( x-y\right)^\top A^\top P A \left( x-y\right) \leq \left( x-y\right)^\top P \left( x-y\right) + \left( x-y\right)^\top (I-A)^\top P (I-A) \left( x-y\right)$ for all $x,y \in \R^n$, that is $A^\top P A \preccurlyeq P + (I-A)^\top P (I-A) = 2P -(A^\top P + P A) + A^\top P A $ and hence $A^\top P + P A \preccurlyeq 2 P $.
\end{proof}

\subsection{Finite-horizon approximation of infinite-horizon discounted-cost optimal control problems} \label{app:finite-horizon}

Let us consider continuous and convex stage-cost functions $\{ \ell_t: \left(\times_{\tau=1}^{t} \mathcal{X}_{\tau}  \right) \rightarrow \R_{\geq 0} \}_{t=1}^{\infty}$, where for all $t \in \N$, $t\geq 1$, 
$\mathcal{X}_t \subseteq \mc{X} \subset \R^n$ is compact and convex, for some compact set $\mc{X}$.
Consider the infinite-dimensional set $\mathcal{S} := \times_{t=1}^{\infty} \mathcal{X}_t = \mathcal{X}_1 \times \mathcal{X}_2 \times \ldots$, $\beta \in (0,1)$, and the function $J_{\infty}: \mathcal{S} \rightarrow \R_{\geq 0}$
defined as
\begin{equation}\label{eq:J-inf-def}
\textstyle J_{\infty}\left( \{ x_t \}_{t=1}^{\infty}  \right) := \sum_{t=1}^{\infty} \beta^t \ell_t\left( \{ x_{\tau} \}_{\tau=1}^{t} \right).
\end{equation}
Let us also define
\begin{equation} \label{eq:x-inf-def}
J_{\infty}^{\star} := \inf_{ y \in \mathcal{S} } J_{\infty}(y), \quad x_{\infty}^{\star} := \arg\min_{ y \in \mathcal{S} } J_{\infty}(y),
\end{equation}
where we assume that the infimum $J_{\infty}^{\star}$ is attained in a unique point $x_{\infty}^{\star} \in \mathcal{S}$.

Analogously, let us define the finite-dimensional counterparts of the above quantities. We consider $\mathcal{S}_T := \times_{t=1}^{T} \mathcal{X}_t \subseteq \mathcal{X}^{T} \subset \left( \R^n \right)^T$, $J_T: \mathcal{S}_T \rightarrow \R_{\geq 0}$ defined as
\begin{equation}\label{eq:J-K-def}
\textstyle  J_{T}\left( \{ x_t \}_{t=1}^{T}  \right) := \sum_{t=1}^{T} \beta^t \ell_t\left( \{ x_{\tau} \}_{\tau=1}^{t} \right),
\end{equation}
besides the optimal value $J_T^{\star}$ and optimizer $x_T^{\star}$, assumed to be unique:
\begin{equation}\label{eq:x-K-def}
J_{T}^{\star} := \min_{ x \in \mathcal{S}_T } J_{T}(x), \quad x_{T}^{\star} := \arg\min_{ x \in \mathcal{S}_T } J_{T}(x).
\end{equation}

We next show that if $T$ is chosen large enough, then $J_T^{\star}$ gets arbitrarily close to $J_{\infty}^{\star}$.

\vspace{0.2cm}
\begin{proposition}[Finite-horizon approximation] \label{prop:finite-horizon-approximation}
Let $J_{\infty}^{\star}$, $J_{K}^{\star}$ be as in \eqref{eq:x-inf-def}, \eqref{eq:J-K-def}, respectively. Then
$\displaystyle \lim_{T \rightarrow \infty} | J_{T}^{\star} -  J_{\infty}^{\star} | = 0$.
{\hfill $\square$}
\end{proposition}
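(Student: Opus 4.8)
The plan is to sandwich $J_T^\star$ between $J_\infty^\star$ and $J_\infty^\star$ minus a geometrically small tail, so that value convergence follows without ever tracking the minimizers $x_T^\star$ and $x_\infty^\star$ themselves. The starting point is the decomposition, valid for every $y = \{y_t\}_{t=1}^{\infty} \in \mathcal{S}$,
\begin{equation*}
J_\infty(y) = J_T\big(\{y_t\}_{t=1}^{T}\big) + \sum_{t=T+1}^{\infty} \beta^t \, \ell_t\big(\{y_\tau\}_{\tau=1}^{t}\big),
\end{equation*}
in which the tail is non-negative because each $\ell_t$ maps into $\R_{\geq 0}$. First I would establish the easy inequality $J_T^\star \leq J_\infty^\star$: since $\mathcal{S} = \times_{t=1}^{\infty}\mathcal{X}_t$ is a product, the truncation $\{(x_\infty^\star)_t\}_{t=1}^{T}$ of the infinite optimizer lies in $\mathcal{S}_T = \times_{t=1}^{T}\mathcal{X}_t$ and is thus feasible for the finite-horizon problem, whence $J_T^\star \leq J_T\big(\{(x_\infty^\star)_t\}_{t=1}^{T}\big) \leq J_\infty(x_\infty^\star) = J_\infty^\star$ by discarding the non-negative tail.

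For the reverse estimate I would instead extend the finite minimizer. Choosing any $\hat y \in \mathcal{S}$ whose first $T$ coordinates coincide with $x_T^\star$ (possible since every $\mathcal{X}_t$ is nonempty), such $\hat y$ is feasible for the infinite-horizon problem, so optimality of $J_\infty^\star$ yields
\begin{equation*}
J_\infty^\star \leq J_\infty(\hat y) = J_T(x_T^\star) + \sum_{t=T+1}^{\infty} \beta^t \, \ell_t\big(\{\hat y_\tau\}_{\tau=1}^{t}\big) = J_T^\star + R_T(\hat y),
\end{equation*}
and therefore $0 \leq J_\infty^\star - J_T^\star \leq R_T(\hat y)$.

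It then remains to bound the tail $R_T(\hat y)$ uniformly. By the Weierstrass theorem each continuous $\ell_t$ attains a finite maximum $M_t := \max_{\times_{\tau=1}^{t}\mathcal{X}_\tau} \ell_t$ over its compact domain; using a uniform bound $M \geq M_t$ on the stage costs one obtains $R_T(\hat y) \leq \sum_{t=T+1}^{\infty} \beta^t M_t \leq M\,\beta^{T+1}/(1-\beta)$. Letting $T \to \infty$ and invoking $\beta \in (0,1)$ closes the sandwich $0 \leq J_\infty^\star - J_T^\star \leq M\,\beta^{T+1}/(1-\beta) \to 0$, which is precisely the claim $\lim_{T\to\infty}|J_T^\star - J_\infty^\star| = 0$.

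The main obstacle is exactly this last uniform tail estimate. Because the extension $\hat y$ depends on $T$, the bound on $R_T(\hat y)$ must hold independently of which extension is chosen, which is why I rely on the maxima $M_t$ over the whole feasible set rather than on the value at a single distinguished sequence. The estimate genuinely requires $\sum_{t}\beta^t M_t < \infty$; under a uniform bound $M \geq M_t$ this holds since $\beta \in (0,1)$, and such a bound is available because the $\ell_t$ are continuous on subsets of the single compact set $\mathcal{X}$ and, in the LQ instance of interest, share a common quadratic form (this is also what makes $J_\infty$ finite-valued in the first place). I would emphasize that the argument deliberately avoids any claim about convergence of the minimizers $x_T^\star \to x_\infty^\star$, which would be considerably more delicate and is unnecessary for value convergence.
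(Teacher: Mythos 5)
Your proof is correct and follows essentially the same route as the paper's: sandwich $J_T^\star$ between $J_\infty^\star$ and $J_\infty^\star$ minus a geometric tail, using the truncation of $x_\infty^\star$ for one inequality and an extension of $x_T^\star$ for the other, then bound the tail by a uniform stage-cost bound times $\beta^{T+1}/(1-\beta)$. The only cosmetic difference is that the paper extends $x_T^\star$ by the tail of $x_\infty^\star$ rather than an arbitrary element of $\mathcal{S}$, and it asserts the uniform bound $L := \sup_t \sup_{\xi} \ell_t(\xi) < \infty$ with the same justification you give.
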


\begin{proof}
Let $\left( x_{\infty}^{\star} \right)_{t}$ denote the component $t$ of $x_{\infty}^{\star}$, which we rewrite as $x_{\infty}^{\star} = \left\{ \left( x_{\infty}^{\star} \right)_{t} \right\}_{t=1}^{\infty}$. We start from the following inequalities:
\begin{equation*}
\begin{array}{l}
J_{T}^{\star} \leq J_{T}\left( \left\{ \left( x_{\infty}^{\star} \right)_{t} \right\}_{t=1}^{T} \right) \leq J_{\infty}^{\star} \\
 \ = J_{T}\left( \left\{ \left( x_{\infty}^{\star} \right)_{t} \right\}_{t=1}^{T} \right) + \sum_{t = T+1}^{\infty} \beta^t \ell_t \left(  \{  \left( x_{\infty}^{\star} \right)_{\tau}  \}_{\tau=1}^{t} \right) \\
\ \leq  J_T^{\star} + \sum_{t = T+1}^{\infty} \beta^t \ell_t \left(  \{  y_\tau  \}_{\tau=1}^{t} \right)
\end{array}
\end{equation*}
where, for all $\tau \geq 1$, $y_\tau := \left( x_{T}^\star \right)_\tau \in \mathcal{X}_\tau$ if $\tau \in \Z[1,T]$, $y_t := \left( x_{\infty}^{\star} \right)_\tau \in \mathcal{X}_\tau$ if $\tau \geq T+1$. 
Now define $L := \sup_{t \geq 1 } \sup_{ \xi \in \mathcal{X}^t } \ell_t( \xi )$, and notice that $L < \infty$ as the functions $\{ \ell_t \}_{t \geq 1}$ are continuous and $\mathcal{X}$ is compact. We then have
\begin{equation*}
\textstyle 0 \leq J_{\infty}^{\star} - J_{T}^{\star} \leq L \sum_{t = T+1}^{\infty} \beta^t \leq \frac{L}{1-\beta} \beta^{T+1} \stackrel{T \rightarrow \infty}{\longrightarrow} 0,
\end{equation*}
from which we conclude that $ \lim_{T \rightarrow \infty} | J_{T}^{\star} -  \textstyle J_{\infty}^{\star} | = 0$.
\end{proof}

In presence of an exponential cost-discount factor in the cost as in \cite[Equation 2.2]{huang:caines:malhame:07}, \cite[Equation 2]{huang:caines:malhame:12}, Proposition \ref{prop:finite-horizon-approximation} suggests that a finite-horizon formulation can approximate a MF $\varepsilon$-Nash equilibrium relative to an infinite-horizon one. The formalization of such claim, under appropriate regularity conditions, is left as future work.

\subsection{Main proofs} \label{app:proofs}

We start from the characterization of the optimal solution of \eqref{eq:constrained-optimizer}.
\vspace{0.2cm}
\begin{lemma}[Parametric optimizer] \label{lem:parametric-optimizer}
The unconstrained optimizer in \eqref{eq:constrained-optimizer} is
\begin{equation} \label{eq:unconstrained-optimizer}
\hat{x}^{\star}(z) := \arg \min_{ x \in \mathbb{R}^n } J(x,z) = \left( Q + \Delta \right)^{-1} \left( (\Delta - C) z - c \right);
\end{equation}
the (constrained) optimizer in \eqref{eq:constrained-optimizer} reads as
\begin{equation} 
x^{i \, \star}(z) = \arg \min_{ x \in \mathcal{X}^i } J(x,z) = \text{Proj}_{\mathcal{X}^i }^{Q+\Delta}(\hat{x}^\star(z)).
\end{equation}
{\hfill $\square$}
\end{lemma}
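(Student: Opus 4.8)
The plan is to reduce both claims to a single completion of the square for the quadratic $J(\cdot,z)$, exploiting that $Q+\Delta \succ 0$ makes this map strictly convex. First I would expand the cost. Writing $\left\| x \right\|_Q^2 = x^\top Q x$, $\left\| x-z \right\|_\Delta^2 = x^\top \Delta x - 2 z^\top \Delta x + z^\top \Delta z$, together with $2(Cz+c)^\top x$, and collecting terms, I obtain
\begin{equation*}
J(x,z) = x^\top(Q+\Delta)x - 2\left((\Delta-C)z - c\right)^\top x + z^\top\Delta z .
\end{equation*}
The only care needed here is gathering the three linear-in-$x$ contributions $-2\Delta z$, $+2Cz$, $+2c$ into the single vector $-2\left((\Delta-C)z-c\right)$; note that $C$ need not be symmetric, but it enters only through the vector $Cz$, so no symmetry of $C$ is required.

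For the unconstrained optimizer, since $Q+\Delta \succ 0$ the map $x \mapsto J(x,z)$ is strictly convex and coercive, hence admits a unique minimizer characterized by the stationarity condition $\nabla_x J = 2(Q+\Delta)x - 2\left((\Delta-C)z-c\right) = 0$. Solving this linear system, which is invertible precisely because $Q+\Delta \succ 0$, yields $\hat{x}^\star(z) = (Q+\Delta)^{-1}\left((\Delta-C)z-c\right)$, establishing \eqref{eq:unconstrained-optimizer}.

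For the constrained optimizer I would complete the square. Setting $P := Q+\Delta$ and $b := (\Delta-C)z-c$, so that $\hat{x}^\star(z) = P^{-1}b$, the identity $x^\top P x - 2b^\top x = \left\| x - P^{-1}b \right\|_P^2 - b^\top P^{-1}b$ gives
\begin{equation*}
J(x,z) = \left\| x - \hat{x}^\star(z) \right\|_{Q+\Delta}^2 + \left( z^\top\Delta z - b^\top P^{-1}b \right),
\end{equation*}
where the parenthesized term is independent of $x$. Minimizing over $x \in \mathcal{X}^i$ therefore coincides with minimizing $\left\| x - \hat{x}^\star(z) \right\|_{Q+\Delta}^2$ over $\mathcal{X}^i$, which is exactly $\textup{Proj}_{\mathcal{X}^i}^{Q+\Delta}(\hat{x}^\star(z))$ by the definition of the projection in $\mathcal{H}_{Q+\Delta}$ recalled in the notation; the minimizer is unique because $\mathcal{X}^i$ is convex and compact and $\left\| \cdot \right\|_{Q+\Delta}$ is an inner-product norm.

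The computation is essentially routine, so I would not expect any deep obstacle. The one step requiring genuine attention is the algebraic bookkeeping in the expansion, in particular the correct formation of $\Delta - C$ and the sign of $c$ in the linear term, and the recognition that $C$ contributes only through $Cz$ so that its possible asymmetry is irrelevant. Once the quadratic is in the standard form $x^\top P x - 2b^\top x + \textup{const}$, the completion of the square and the identification with the weighted projection are immediate.
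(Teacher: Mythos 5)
Your proof is correct and follows essentially the same route as the paper: the unconstrained optimizer via the stationarity condition for the strictly convex quadratic, and the constrained claim by showing $J(\cdot,z)$ and $\left\| \cdot - \hat{x}^\star(z)\right\|_{Q+\Delta}^2$ differ only by a constant on $\mathcal{X}^i$. The only cosmetic difference is direction — you complete the square starting from $J$, while the paper expands the projection's objective and recovers $J$ — which is the same computation read backwards.
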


\begin{proof}
The closed-form expression of the (unique) unconstrained optimizer $\hat x^\star(z)$ directly follows from the equation $0 = \frac{\partial}{ \partial x }J(x,z) = \frac{\partial}{ \partial x }\left( x^\top Q x +  (x-z)^\top \Delta (x-z) + 2 \left( C z + c \right)^\top x \right) =  2x^\top Q + 2(x-z)^\top \Delta + 2 \left( C z + c \right)^\top$.
Then the following equalities hold:
\begin{equation*}
\begin{array}{l}
\displaystyle \text{Proj}_{\mathcal{X}^i}^{Q + \Delta}(\hat{x}^\star(z)) = \arg \min_{ y \in \mathcal{X}^i } \left\| y -  \hat{x}^\star(z)\right\|_{Q + \Delta}^2  \\
\displaystyle = \arg \min_{ y \in \mathcal{X}^i } \left( y - ( Q + \Delta )^{-1}\left( (\Delta - C) z - c\right) \right)^\top (Q + \Delta) \cdot \\
\displaystyle \hfill \qquad \qquad \qquad \qquad \quad  \cdot \left( y - ( Q + \Delta )^{-1}\left( (\Delta - C) z - c\right) \right) \\
\displaystyle = \arg \min_{ y \in \mathcal{X}^i } y^\top (Q + \Delta) y - 2 y^\top \left( (\Delta - C) z - c\right) \\
\displaystyle = \arg \min_{ y \in \mathcal{X}^i } y^\top Q y + y^\top \Delta y - 2 y^\top \Delta z + 2 \left( C z + c \right)^\top y \\
\displaystyle = \arg \min_{ y \in \mathcal{X}^i } y^\top Q y + (y - z)^\top \Delta (y - z) + 2 \left( C z + c \right)^\top y \\
 = x^{i \, \star}(z).
\end{array}
\end{equation*}
\end{proof}

\vspace{0.2cm}
\begin{remark} \label{rem:optimizer-Lipschitz}
Since the mapping $\hat{x}^{\star}$ in \eqref{eq:unconstrained-optimizer} is affine and hence Lipschitz, and the projection operator $\text{Proj}_{\mathcal{X}^i}^{Q + \Delta}$ has Lipschitz constant $1$ in $\mc{H}_{Q+\Delta}$ \cite[Proposition 4.8]{bauschke:combettes}, both mappings $\hat{x}^{\star}(\cdot)$ and ${x}^{i \, \star}(\cdot) = \text{Proj}_{\mathcal{X}^i}^{Q + \Delta}( \hat{x}^{\star}(\cdot) )$ in \eqref{eq:constrained-optimizer} are Lipschitz with the same constant.
{\hfill $\square$}
\end{remark}

\subsection*{Proof of Proposition \ref{prop:exists-Nash}}
It follows from Definition \ref{def:Nash-equ} that a set of strategies $\{ \bar{x}^{i} \in \mc{X}^i \}_{i=1}^{N}$ is a MF Nash equilibrium if, for all $i \in \Z[1,N]$,
\begin{align*}
 \bar{x}^i &=\textstyle \argmin_{y \in \mc{X}^i} J\left( y, \frac{1}{N} a_i y + \frac{1}{N} \sum_{j \neq i}^{N} a_j \bar{x}^j \right) \\
  &= \textstyle { x^{i}_{ \textup{br} }({\boldsymbol{\bar{x}}^{-i}}) }=: \argmin_{y \in \mc{X}^i} \bar{J}\left( y,  \frac{1}{N} \sum_{j \neq i}^{N} a_j \bar{x}^j\right),
\end{align*}
where the cost function $\bar J$ is quadratic as well.
{Note that, for each $i \in \Z[1,N]$,} the quantity $\frac{1}{N} \sum_{j \neq i}^{N} a_j \bar{x}^j$ can be written as
\begin{equation*}
\begin{array}{l}
\textstyle \frac{1}{N} \left[ a_1 I_n, \ldots, a_{i-1} I_n, \, \bs{0}, \, a_{i+1} I_n, \ldots, a_N I_n \right] \left[ \bar{x}^1; \ldots; \bar{x}^N \right]  \vspace{0.05cm}\\
\textstyle = \left( \left[ \frac{a_1}{N}, \ldots, \frac{a_{i-1}}{N}, \, 0, \, \frac{a_{i+1}}{N}, \ldots, \frac{a_N}{N} \right] \otimes I_n \right) \left[ \bar{x}^1; \ldots; \bar{x}^N \right] \vspace{0.05cm} \\
\textstyle = \left( \boldsymbol{a}_{-i}^{\, \top} \otimes I_n \right) \left[ \bar{x}^1; \ldots; \bar{x}^N \right],
\end{array}
\end{equation*}
where we define $\boldsymbol{a}_{-i}^{\, \top} := \left[ \frac{a_1}{N}, \ldots, \frac{a_{i-1}}{N}, \, 0, \, \frac{a_{i+1}}{N}, \ldots, \frac{a_N}{N} \right] \in \R^{1 \times N}$ for all $i \in \Z[1,N]$. {Consequently,} 
$\{ \bar{x}^{i} \in \mc{X}^i \}_{i=1}^{N}$ is a MF Nash equilibrium if and only if
\begin{equation*}
\textstyle \left[ \begin{matrix} \bar{x}^1 \\ \vdots \\ \bar{x}^N \end{matrix} \right] = 
\textstyle \left[ \begin{matrix} \argmin_{y \in \mc{X}^1} \bar{J}\left( y,  \left( \boldsymbol{a}_{-1}^{\, \top} \otimes I_n \right) \left[ \begin{smallmatrix} \bar{x}^1 \\ \vdots \\ \bar{x}^N \end{smallmatrix} \right] \right)   \\ \vdots \\ 
\argmin_{y \in \mc{X}^N} \bar{J}\left( y,  \left( \boldsymbol{a}_{-N}^{\, \top} \otimes I_n \right) 
\left[ \begin{smallmatrix} \bar{x}^1 \\ \vdots \\ \bar{x}^N \end{smallmatrix} \right] \right) 
\end{matrix} \right].
\end{equation*}
Equivalently, $\{ \bar{x}^{i} \in \mc{X}^i \}_{i=1}^{N}$ is a MF Nash equilibrium if and only if $\left[ \bar{x}^1; \ldots; \bar{x}^N \right]$ is a fixed point of the continuous mapping
\begin{equation*}
 \left[ \begin{matrix} \argmin_{y \in \mc{X}^1} \bar{J}\left( y,  \left( \boldsymbol{a}_{-1}^{\, \top} \otimes I_n \right) \left( \bs{\cdot} \right) \right)   \\ \vdots \\ 
\argmin_{y \in \mc{X}^N} \bar{J}\left( y,  \left( \boldsymbol{a}_{-N}^{\, \top} \otimes I_n \right) \left( \bs{\cdot} \right) \right) 
\end{matrix} \right] 
\end{equation*}
from $\R^{nN}$ to the compact set $ \times_{i=1}^N \mc{X}^i \subset \R^{nN}$.
The existence of a fixed point of the latter mapping, and equivalently the existence of a MF Nash equilibrium, then follows from \cite[Theorem 4.1.5(b)]{smart1980fixed}.
\hfill{$\blacksquare$} 

\subsection*{Proof of Theorem \ref{thm:epsilon:nash}}

From Lemma \ref{lem:parametric-optimizer} we have that $x^{i \, \star}(z) = \text{Proj}_{\mathcal{X}^i}^{Q + \Delta}\left(  \left( Q + \Delta \right)^{-1} \left( (\Delta - C) z - c \right) \right)$, that is the metric projection (in the Euclidean space $\mathcal{H}_{Q + \Delta}$) onto the compact and convex set $\mathcal{X}^i$  of the affine mapping $z \mapsto \left( Q + \Delta \right)^{-1} \left( (\Delta - C) z - c \right)$. 
Therefore the mappings $\{x^{i \, \star} \}_{i=1}^{N}$  are Lipschitz with the same constant, that is, there exists $L>0$ such that $\left\| x^{i \, \star}(v) - x^{i \, \star}(w) \right\|_{\infty} \leq L \left\| v-w\right\|_{\infty}$ for all $v,w \in \R^n$ and for all $i \in \Z[1,N]$.

Now, $J$ in \eqref{eq:constrained-optimizer} is a quadratic function and takes values on a compact subset of $\R^n \times \R^n$, therefore it is Lipschitz, and hence there exists $M>0$ such that \\
$| J(v, z_1) - J(w,z_2) | \leq M \left( \left\| v-w \right\|_{\infty} + \left\| z_1-z_2 \right\|_{\infty} \right) $ for all $v,w \in \R^n$, $z_1, z_2 \in \R^n$. 
Let us also define $D := \max_{ v,w \in \mathcal{X} } \left\| v-w\right\|_{\infty}$, where $\mathcal{X} \supseteq \cup_{ N \geq 0 } \cup_{i=1}^{N} \mathcal{X}^i $ is compact from Assumption \ref{ass:uniform-compactness}.

We now consider an arbitrary fixed point $\bar{z} = \frac{1}{N}\sum_{i=1}^{N} a_i x^{i \, \star}(\bar{z}) =  \frac{1}{N}\sum_{i=1}^{N} a_i \bar{x}^i$ of the aggregation mapping $\mathcal{A}$ in \eqref{eq:mean}. 
We show that an arbitrary agent $i$ can improve its cost only by an amount $\varepsilon = \varepsilon_N = \mathcal{O}\left( 1/N \right)$ if we fix the strategies $\{ \bar{x}^j := x^{j \, \star}( \bar{z} ) \}_{j\neq i}^{N}$ of all other agents. 
Let $\tilde{x}^{i \, \star}$ denote the optimal strategy for agent $i$ given the strategies of the others $\{ \bar{x}^j \}_{j\neq i}^{N}$, that is, 
$\tilde{x}^{i \, \star} := \arg\min_{y \in \mathcal{X}^i } J\left( y, \frac{1}{N}\left(  a_i y + \sum_{j \neq i}^{N} a_j \bar{x}^j \right) \right) $, and let 
$\displaystyle \tilde{\tilde{x}}^{i \, \star} := \arg\min_{y \in \mathcal{X}^i } \textstyle J\left( y, \frac{1}{N}\left(  a_i \tilde{x}^{i \, \star} + \sum_{j \neq i}^{N} a_j \bar{x}^j \right) \right) $.
Let us also define the associated costs:
\begin{equation*}
\begin{array}{lll}
\textstyle \bar{J}^{i \, \star} & = &
\textstyle J\left( \bar x^{i }, \frac{1}{N}\left(  a_i \bar x^{i } + \sum_{j \neq i}^{N} a_j \bar{x}^j \right) \right) \\
& = & \textstyle\min_{y \in \mathcal{X}^i } J\left( y, \frac{1}{N}\left(  a_i \bar x^i + \sum_{j \neq i}^{N} a_j \bar{x}^j \right) \right), \\
\textstyle \tilde{J}^{ i \, \star} & = &  
\textstyle J\left( \tilde x^{i \, \star }, \frac{1}{N}\left(  a_i \tilde x^{i \, \star } + \sum_{j \neq i}^{N} a_j \bar{x}^j \right) \right) \\ 
& = & \textstyle
\min_{y \in \mathcal{X}^i } J\left( y, \frac{1}{N}\left(  a_i y + \sum_{j \neq i}^{N} a_j \bar{x}^j \right) \right), \\
\textstyle \tilde{\tilde{J}}^{ i \, \star} & = &
\textstyle J\left( \tilde{\tilde{x}}^{ i \, \star }, \frac{1}{N}\left(  a_i \tilde x^{ i \, \star } + \sum_{j \neq i}^{N} a_j \bar{x}^j \right) \right) \\
& = & \textstyle \min_{y \in \mathcal{X}^i } J\left( y, \frac{1}{N}\left(  a_i \tilde x^{ i \, \star} + \sum_{j \neq i}^{N} a_j \bar{x}^j \right) \right).
\end{array}
\end{equation*}
Note that $\tilde{\tilde{J}}^{i \, \star} \leq \tilde{J}^{ i \, \star} \leq\bar{J}^{ i \, \star}$. 
Then we define $\tilde{z} := \frac{1}{N}\left(  a_i \tilde{x}^{i \, \star} +  \sum_{j \neq i}^{N} a_j\bar{x}^j \right)$ and notice that 
$\bar{x}^i = x^{ i \, \star}\left( \bar{z}\right)$ and 
$\tilde{\tilde{x}}^{i \, \star} = x^{i \, \star}\left( \tilde{z}\right)$.
Therefore, the following inequalities hold true:
\begin{align}
\begin{split}
\textstyle 0 & \leq \bar{J}^{ i \, \star} - \tilde{J}^{i \, \star}  \leq \bar{{J}}^{ i \, \star} - \tilde{\tilde{J}}^{ i \, \star}  = 
\left| J\left( \bar{x}^i, \bar{z} \right) - J\left( \tilde{\tilde{{x}}}^{i \, \star}, \tilde{z} \right) \right|  \\
& \leq \textstyle M \left\| \bar{x}^i - \tilde{\tilde{x}}^{i \, \star} \right\|_\infty  + \textstyle M \left\| \bar{z} - \tilde{z} \right\|_\infty  \\
& = \textstyle M \left\| \bar{x}^i - \tilde{\tilde{x}}^{i \, \star} \right\|_\infty + 
\textstyle \frac{M}{N} a_i \left\| \bar{x}^i - \tilde{x}^{i \, \star} \right\|_\infty  \\
& = M \left\| x^{i \, \star}\left( \bar{z}\right) - x^{i \, \star}\left( \tilde{z}\right) \right\|_\infty + 
\textstyle \frac{M}{N} a_i \left\| \bar{x}^i - \tilde{x}^{i \, \star} \right\|_\infty  \vspace{0.1cm} \\
& \leq  M \, L \left\| \bar{z} - \tilde{z} \right\|_\infty +  \frac{M}{N} \bar{a} \left\| \bar{x}^i - \tilde{x}^{i \, \star} \right\|_\infty \\
& = \textstyle  \frac{\bar{a} \, M \, (L+1)}{N} \left\| \bar{x}^i - \tilde{x}^{i \, \star} \right\|_\infty \\
& \leq \textstyle  \frac{\bar{a} \, M \, D \, (L+1)}{N} =: \varepsilon_N.
\end{split}
\end{align}
This proves that for all $\varepsilon > 0$ there exists $\bar{N}_{\varepsilon} := \frac{\bar{a} \, M \, D\, (L+1)}{\varepsilon}$ such that the cost $\bar{J}^{ i \, \star}$ of any agent $i$ at a fixed point $\bar{z}$ is $\varepsilon$-close to its true optimal cost $\tilde{J}^{i \, \star}$, for all population sizes $N \geq \bar{N}_{\varepsilon}$.
{\hfill $\blacksquare$}

\subsection*{Proof of Theorem \ref{th:mean-pseudocontractive}}

It follows from the proof of Lemma \ref{lem:nonexpansive-affine} in Appendix \ref{app:operator-theory} that the unconstrained optimizer $\hat{x}^\star$ in \eqref{eq:unconstrained-optimizer} is CON in $\mathcal{H}_{Q+\Delta}$ if and only if there exist $\epsilon > 0$ such that
$ \left( (Q+\Delta)^{-1}(\Delta - C) \right)^\top (Q + \Delta) \left( (Q+\Delta)^{-1}(\Delta - C) \right) = (\Delta - C)^\top  (Q+\Delta)^{-1}(\Delta - C) \preccurlyeq (1-\epsilon)^2 (Q+\Delta)  \Leftrightarrow 
(\Delta - C)^\top  \left( (1-\epsilon) \, (Q+\Delta) \right)^{-1}(\Delta - C) \preccurlyeq (1-\epsilon) \, (Q+\Delta)$. 

As $Q + \Delta \succ 0$, by the Schur complement \cite[Section A.5.5]{boyd:vandenberghe} the last inequality is equivalent to 
\begin{equation*}
\begin{array}{l}
\left[ \begin{matrix}  (1-\epsilon) (Q + \Delta) & \Delta - C \\ (\Delta - C)^\top &  (1-\epsilon)(Q + \Delta) \end{matrix}\right] \succcurlyeq 0
\  \vspace{0.2cm} \\
\Leftrightarrow \left[ \begin{matrix}  Q + \Delta & \Delta - C \\ (\Delta - C)^\top & Q + \Delta \end{matrix}\right] \succcurlyeq \epsilon  \left[ \begin{matrix}  Q + \Delta & 0 \\ 0 & Q + \Delta \end{matrix}\right] \  \vspace{0.25cm} \\
\Leftrightarrow \left[ \begin{matrix}  Q + \Delta & \Delta - C \\ (\Delta - C)^\top & Q + \Delta \end{matrix}\right] \succcurlyeq \varepsilon I_{2n}
\end{array}
\end{equation*}
for some $\varepsilon > 0$.
The proof that $\hat{x}^\star$ in \eqref{eq:unconstrained-optimizer} is NE in $\mathcal{H}_{Q+\Delta}$ if and only if  \eqref{eq:condition-contractive-constrained-optimizer} holds with $\epsilon \geq 0$ is analogous (with $\epsilon = \varepsilon = 0$).

Since $\textup{Proj}_{\mathcal{X}^i}^{Q + \Delta}$ is FNE \cite[Proposition 4.8]{bauschke:combettes} and hence NE in $\mathcal{H}_{Q + \Delta}$, that is \\
$\textstyle \left\| \textup{Proj}_{\mathcal{X}^i}^{Q + \Delta}(x) - \textup{Proj}_{\mathcal{X}^i}^{Q + \Delta}(y) \right\|_{Q + \Delta} \leq \left\| x - y \right\|_{ Q + \Delta }$ for all $x, y \in \R^n$, it follows that the composition  $x^{i \, \star}(\cdot) = \textup{Proj}_{\mathcal{X}^i}^{ Q + \Delta }( \hat{x}^{\star}(\cdot) )$ is CON in $\mathcal{H}_{Q + \Delta}$ if $\hat x^\star$ is CON in $\mathcal{H}_{Q + \Delta}$,  NE in $\mathcal{H}_{Q + \Delta}$ if $\hat x^\star$ is NE in $\mathcal{H}_{Q + \Delta}$.

For the rest of the proof, we need the following fact, adapted from \cite[Proposition 4.2 (iv)]{bauschke:combettes}.

\vspace{0.2cm}
\begin{lemma} \label{lem:equivalent-FNE}
A mapping $f: \R^n \rightarrow \R^n$ is FNE in $\mathcal{H}_{P}$, $P \in \bbS_{\succ 0}^{n}$, if and only if
\begin{equation} \label{eq:FNE-equivalent-condition}
\left\| f(x) - f(y) \right\|_{P}^{2} \leq \left( x-y\right)^\top P \left( f(x) - f(y)\right)
\end{equation}
for all $x,y \in \R^n$.
{\hfill $\square$}
\end{lemma}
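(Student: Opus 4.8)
The plan is to unfold Definition \ref{def:FNE-mapping}, rewritten in the $P$-weighted geometry of $\mathcal{H}_P$, and collapse it to \eqref{eq:FNE-equivalent-condition} by a single expansion of a squared norm. Since every manipulation is an equivalence, both implications of the statement follow simultaneously, so there is no need to argue the two directions separately.

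First I would abbreviate $u := f(x) - f(y)$ and $v := x - y$, so that $f$ being FNE in $\mathcal{H}_P$ means $\left\| u \right\|_P^2 \leq \left\| v \right\|_P^2 - \left\| u - v \right\|_P^2$ for all $x, y \in \R^n$. Next I would expand the rightmost term through the definition of the induced norm, $\left\| u - v \right\|_P^2 = \left\| u \right\|_P^2 - 2 v^\top P u + \left\| v \right\|_P^2$, where the cross term is $2 v^\top P u$ because $P = P^\top$ forces $u^\top P v = v^\top P u$. Substituting this into the FNE inequality cancels the two copies of $\left\| v \right\|_P^2$ and yields $\left\| u \right\|_P^2 \leq 2 v^\top P u - \left\| u \right\|_P^2$, that is, $2 \left\| u \right\|_P^2 \leq 2 v^\top P u$. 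Dividing by two and substituting back the definitions of $u$ and $v$ gives exactly $\left\| f(x) - f(y) \right\|_P^2 \leq (x-y)^\top P \left( f(x) - f(y) \right)$, which is \eqref{eq:FNE-equivalent-condition}.

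The computation is entirely elementary, so there is no genuine obstacle; the only point requiring care is the symmetry of $P$, which is what lets the cross term be written as $2 v^\top P u$ and, indeed, is what guarantees that $\left\| \cdot \right\|_P$ is a bona fide norm in the first place. Positive definiteness of $P$ is otherwise not needed beyond this. Since each step above is reversible, the resulting chain of equivalences proves the lemma in both directions.
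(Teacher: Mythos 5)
Your proof is correct and follows essentially the same route as the paper's: both expand $\left\| (x-y) - \left( f(x)-f(y) \right) \right\|_P^2$ using the symmetry of $P$, cancel the $\left\| x-y \right\|_P^2$ terms, and divide by two, with every step an equivalence so both directions follow at once. No issues.
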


\vspace{0.2cm}
\begin{proof}
From Definition \ref{def:FNE-mapping}, we have $f$ FNE if and only if, for all $x, y \in \R^n$,
\begin{equation*}
\begin{array}{l}
\left\| f(x) - f(y) \right\|_{P}^{2} \leq \left\| x-y\right\|_P^{2} - \left\| (x-y) - \left( f(x) - f(y)\right) \right\|_P^{2} \\
= \left\| x-y\right\|_P^{2} - \\
\left( \left\| x-y\right\|_P^{2} + \left\| f(x) - f(y) \right\|_{P}^{2} - 2 \left( x-y\right)^\top P \left( f(x) - f(y)\right) \right) \\
= - \left\| f(x) - f(y) \right\|_{P}^{2}  + 2 \left( x-y\right)^\top P \left( f(x) - f(y)\right),
\end{array}
\end{equation*}
and equivalently \eqref{eq:FNE-equivalent-condition}.
\end{proof}

From \cite[Proposition 4.8]{bauschke:combettes} we have that $\text{Proj}_{ \mathcal{C}}^{ P }$ is FNE  in $\mathcal{H}_{P}$,  hence by Lemma \ref{lem:equivalent-FNE}:
\begin{equation}\label{eq:FNE-distorted-projection}
\left\| \text{Proj}_{ \mathcal{C}}^{ P }( \tilde{x} ) - \text{Proj}_{ \mathcal{C}}^{ P }( \tilde{y} )\right\|_{P}^{2} \leq \\
\left( \tilde{x} - \tilde{y} \right)^\top P \left( \text{Proj}_{ \mathcal{C}}^{ P }( \tilde{x} ) - \text{Proj}_{ \mathcal{C}}^{ P }( \tilde{y} )\right)
\end{equation}
for all $\tilde{x}, \tilde{y} \in \R^n$. Therefore, with $\tilde{x} := A x + b$ and $ \tilde{y} := A y + b$, the FNE condition in \eqref{eq:FNE-distorted-projection} implies that
\begin{equation} \label{eq:FNE-nondistorted-projection}
\left\| \text{Proj}_{ \mathcal{C}}^{ P }(Ax+b) - \text{Proj}_{ \mathcal{C}}^{ P }(Ay+b)\right\|_{P}^{2} \leq \\
\left( x-y\right)^\top A^\top P \left( \text{Proj}_{ \mathcal{C}}^{ P }(Ax+b) - \text{Proj}_{ \mathcal{C}}^{ P }(Ay+b)\right)
\end{equation}
for all $x,y \in \R^n$.


Now, since $x^{i \, \star}(z) = \text{Proj}_{ \mathcal{X}^i }^{Q + \Delta}\left( \hat{x}(z)\right) = \text{Proj}_{ \mathcal{X}^i }^{Q + \Delta}\left( (Q + \Delta)^{-1}\left(  (\Delta - C) z - c \right)  \right)$ from \eqref{eq:constrained-optimizer}, let us consider \eqref{eq:FNE-nondistorted-projection} with $Q+\Delta$ in place of $P$, $(Q + \Delta)^{-1} (\Delta - C)$ in place of $A$, $-(Q + \Delta)^{-1}c$ in place of $b$,
$\mathcal{X}^i$ in place of $\mathcal{C}$, and $v,w$ in place of $x,y$. We hence obtain
\begin{equation} \label{eq:FNE-nondistorted-projection-xn}
0 \leq \left\| x^{i \, \star}(v) - x^{i \, \star}(w) \right\|_{Q+\Delta}^{2} \leq \\
\left( v-w\right)^\top (\Delta - C)^\top \left( x^{ i \, \star}(v) - x^{ i \, \star}(w)\right)
\end{equation}
for all $v,w \in \R^n$. 

If $Q+\Delta \succcurlyeq \Delta - C \succ 0$, i.e., $-Q \preccurlyeq C \prec \Delta$, then $\left\| x^\star(v) - x^\star(w) \right\|_{\Delta - C}^{2} \leq \left\| x^\star(v) - x^\star(w) \right\|_{Q+\Delta}^{2} $ for all $v,w \in \R^n$. Therefore, it follows from \eqref{eq:FNE-nondistorted-projection-xn} that
\begin{equation*} 
\left\| x^\star(v) - x^\star(w) \right\|_{\Delta - C}^{2} \leq \left( v-w\right)^\top (\Delta - C) \left( x^\star(v) - x^\star(w)\right)
\end{equation*}
for all $v,w \in \R^n$, which is equivalent to $x^\star$ being FNE in $\mathcal{H}_{\Delta - C}$ by Lemma \ref{lem:equivalent-FNE}. 

On the other hand, from \eqref{eq:FNE-nondistorted-projection-xn} we get
\begin{equation*}
0 \leq \left(  x^\star(w) - x^\star(v) \right)^\top (C - \Delta) \left( v-w\right) 
\end{equation*}
for all $v,w$, which for $C - \Delta \succ 0$ is equivalent to $-x^\star(\cdot)$ being MON in $\mathcal{H}_{C-\Delta}$ by Definition \ref{def:MON}. We now notice that $\text{Id}(\cdot)$ is a SMON mapping by Definition \ref{def:MON}; hence $\text{Id} - x^\star$, sum of SMON and MON mappings, is SMON in $\mathcal{H}_{C-\Delta}$ by Lemma \ref{lem:MON-SAC}. It then follows from Lemma \ref{lem:from-SAC-to-SPC} in Appendix \ref{app:operator-theory} that $\text{Id} - x^\star$ Lipschitz continuous and SMON in $\mathcal{H}_{C-\Delta}$ implies that $\text{Id} - \left( \text{Id} - x^\star \right) = x^\star$ is SPC in $\mathcal{H}_{C-\Delta}$.
{\hfill $\blacksquare$}

\subsection*{Proof of Theorem \ref{prop:mean-pseudocontractive}}
The mapping $\mathcal{A}$ in \eqref{eq:mean} is a convex hull among the mappings $\{ x^{i \, \star} \}_{i=1}^{N}$, that are uniformly Lipschitz in view of Remark \ref{rem:optimizer-Lipschitz}, therefore $\mathcal{A}$ is Lipschitz continuous as well. Moreover, $\mathcal{A}$ is compact valued on $\frac{1}{N} \sum_{i=1}^{N} a_i \mc{X}^i$, thus it has at least one fixed point \cite[Theorem 4.1.5(b)]{smart1980fixed}.

It follows from Theorem \ref{th:mean-pseudocontractive} that if $-Q \preccurlyeq C \prec \Delta$ then, for all $i \in \Z[1,N]$, the mapping $x^{i \, \star}(\cdot)$ is FNE in $\mathcal{H}_{\Delta - C}$. Therefore, $\mathcal{A}(\cdot) = \frac{1}{N}\sum_{i=1}^{N} a_i x^{i \, \star}(\cdot) $, convex combination of FNE mappings, is FNE as well \cite[Example 4.31]{bauschke:combettes}. Analogously, the convex combination of CON (NE) mappings is CON (NE) as well.

For the SPC case, if $\Delta \prec C$ then it follows from the proof of Theorem \ref{th:mean-pseudocontractive} that, for all $i \in \Z[1,N]$, $\text{Id} - x^{i \, \star}$ is SMON in $\mathcal{H}_{C-\Delta}$, see Definition \ref{def:MON}. Then it follows from Lemma \ref{lem:MON-SAC} that 
$\frac{1}{N} \sum_{i=1}^{N} a_i \left( \text{Id}(\cdot) - x^{i \, \star}(\cdot) \right)$ is SMON as well, which implies that 
$\text{Id} -  \frac{1}{N} \sum_{i=1}^{N} \left\{ a_i \text{Id} - a_i x^{i \, \star} \right\} = \frac{1}{N} \sum_{i=1}^{N} a_i x^{i \, \star} = \mathcal{A}$ is SPC in view of Lemma \ref{lem:from-SAC-to-SPC}.
{\hfill $\blacksquare$}

\subsection*{ Proof of Corollary \ref{cor:iterations}}
From Theorem \ref{th:mean-pseudocontractive}, if \eqref{eq:condition-contractive-constrained-optimizer} holds for some $\epsilon > 0$, then $\mathcal{A}$ in \eqref{eq:mean} is CON and if $-Q \preccurlyeq C \prec \Delta$, then $\mathcal{A}$ is FNE. In both cases, the Picard--Banach iteration converges a fixed point of $\mathcal{A}$ \cite[Theorem 2.1]{berinde}, \cite[Section 1, p. 522]{combettes:pennanen:02}, which is unique if $\mathcal{A}$ is CON.

For the other two fixed point iterations, we need to consider $\mathcal{A}$ in \eqref{eq:mean} as a mapping from a compact convex set to itself. 
This can be assumed without loss of generality (that is, up to discarding the initial condition $z_{(0)}$) since $\mathcal{A}$ takes values in $\frac{1}{N} \sum_{i=1}^{N} a_i \mathcal{X}^{i}$, which is a linear transformation of the compact convex sets $\{ \mathcal{X}^i \}_{i=1}^{N}$, as hence compact and convex as well \cite[Section 3, Theorem 3.1]{rockafellar}.
If \eqref{eq:condition-contractive-constrained-optimizer} holds for some $\epsilon \geq 0$ then $\mathcal{A}$ is NE from Theorem \ref{th:mean-pseudocontractive} and the Krasnoselskij iteration converges to a fixed point of $\mathcal{A}$ \cite[Theorem 3.2]{berinde}. 

Finally, if $\epsilon \geq 0$ in \eqref{eq:condition-contractive-constrained-optimizer} or $\Delta \prec C$ hold true, then $\mathcal{A}$ is SPC. Therefore the Mann iteration converges to a fixed point \cite[Fact 4.9, p. 112]{berinde}, \cite[Theorem R, Section I]{osilike:udomene:01}.
{\hfill $\blacksquare$}

\subsection*{Proof of Corollary \ref{cor:Huang-converges}}
It follows from Section \ref{sec:Discrete-time constrained linear quadratic mean field control} that the LQ optimal control problem in \eqref{eq:xi-ui-MFC} with cost function $J_{\gamma}$ in \eqref{eq:J-x-u-z-g}, can be rewritten in the same format of \eqref{eq:constrained-optimizer} with block structured matrices
\begin{equation*}
Q= \text{diag}(\bs{0}, \tilde R), \quad \Delta = \text{diag}( \tilde{Q}, \bs{0}), \quad C=(1-\gamma)\Delta,
\end{equation*}
where $\tilde R:=\textup{diag}(R_0,\hdots,R_{T-1})\succ 0$ and $\tilde Q:=\text{diag}(Q_1,\hdots,Q_{T})\succ 0$. 
To exploit the first point in Corollary~\ref{cor:iterations}, we need to consider the matrix
\begin{equation*}
\begin{array}{l}
\left[\begin{array}{cc}Q+\Delta & \Delta-C \\(\Delta-C)^\top &  Q+\Delta\end{array}\right]=
\left[\begin{array}{c|c} \left.\begin{smallmatrix}\tilde Q & 0 \\ 0 & \tilde R\end{smallmatrix}\right. & \left.\begin{smallmatrix}\gamma \tilde Q & 0 \\ 0 & 0\end{smallmatrix}\right.   \vspace{0.05cm} \\  \hline \noalign{\vskip 0.025cm} \left.\begin{smallmatrix} \gamma \tilde Q & 0 \\ 0 & 0\end{smallmatrix}\right.  &  \left.\begin{smallmatrix}\tilde Q & 0 \\ 0 & \tilde R\end{smallmatrix}\right.\end{array}\right] \\
\ = \Pi^\top \text{diag}\left( \left[\begin{smallmatrix}1 & \gamma  \\ \gamma   &1\end{smallmatrix}\right] \otimes \tilde Q \,, I_2 \otimes \tilde R \right) \Pi,
\end{array}
\end{equation*}
where $\Pi \in \R^{ 2(p+m)T \, \times \, 2(p+m)T }$ is the permutation matrix that swaps the second and third block columns. 
Since the eigenvalues of the Kronecker product of two matrices equal to the product of the eigenvalues of the two matrices, we have that $I_2 \otimes \tilde R \succ 0$ and that $\left[\begin{smallmatrix}1 & \gamma  \\ \gamma   &1\end{smallmatrix}\right] \otimes \tilde Q $ is positive definite if $-1 < \gamma<1$, positive semidefinite if $-1\le \gamma \leq 1$. Since $\Pi$ is invertible ($\Pi^\top \Pi = I$) and hence has no $0$ eigenvalues, we conclude that $\Pi^\top \text{diag}\left( \left[\begin{smallmatrix}1 & \gamma  \\ \gamma   &1\end{smallmatrix}\right] \otimes \tilde Q \,, I_2 \otimes \tilde R \right) \Pi \succ 0$ 
($\succcurlyeq 0$) if $-1 < \gamma<1$ ($-1 \leq \gamma \leq 1$). The proof then follows from Corollary~\ref{cor:iterations}.
{\hfill $\blacksquare$}

\subsection*{Proof of Corollary \ref{cor:Callaway-converges}}
We consider the matrix inequality \eqref{eq:condition-contractive-constrained-optimizer} in Theorem \ref{th:mean-pseudocontractive} with $Q = 0$, $\Delta = \delta I$, $\delta>0$, and $C = a I$, $a>0$.  The existence of $\epsilon > 0$ such that
$$\left[ \textstyle  \begin{matrix} \delta I & (\delta - a) I \\ (\delta-a)I & \delta I \end{matrix} \right] \succcurlyeq \epsilon I,$$
is equivalent, by Schur complement \cite[Section A.5.5]{boyd:vandenberghe}, to 
$\delta - (\delta - a) \delta^{-1} (\delta - a) > 0 \Leftrightarrow \delta^2 - (\delta - a)^2 > 0 \Leftrightarrow \delta > a/2$. This implies that if $\delta > a/2$ then $\mathcal{A}$ is CON in $\mathcal{H}_{\delta I }$ and, from Corollary \ref{cor:iterations}, the Picard--Banach iteration in \eqref{eq:Picard-Banach} converges to its unique fixed point.

We now consider the case of $\delta = a/2$. The condition of Theorem \ref{th:mean-pseudocontractive} for $\mathcal{A}$ being NE in $\mathcal{H}_{\delta I }$ is that $\frac{a}{2} \left[ \begin{smallmatrix} I & -I \\ -I & I \end{smallmatrix} \right] = \frac{a}{2} \left[ \begin{smallmatrix} 1 & -1 \\ -1 & 1\end{smallmatrix}\right] \otimes I \succcurlyeq 0$, which is satisfied because $a>0$ and $\left[ \begin{smallmatrix} 1 & -1 \\ -1 & 1\end{smallmatrix}\right] \otimes I$ has non-negative eigenvalues.
The convergence of the Krasnoselskij iteration in \eqref{eq:Krasnoselskij} follows from Corollary \ref{cor:iterations}.

We finally consider the case $\delta \in (0, \, a/2)$. From the sufficient condition in Theorem \ref{th:mean-pseudocontractive}, we get that $\mathcal{A}$ is SPC in $\mathcal{H}_{(a-\delta)I }$ if $\delta \in (0, \, a)$. The convergence of the Mann iteration in \eqref{eq:Mann} then follows from Corollary \ref{cor:iterations}.
{\hfill $\blacksquare$}

\bibliographystyle{IEEEtran}
\bibliography{library}

\end{document}